\documentclass[runningheads]{llncs}
\usepackage[utf8]{inputenc}
\usepackage[T1]{fontenc}
\usepackage{graphicx}
\usepackage{grffile}
\usepackage{longtable}
\usepackage{wrapfig}
\usepackage{rotating}
\usepackage[normalem]{ulem}
\usepackage{amsmath}
\usepackage{textcomp}
\usepackage{amssymb}
\usepackage{capt-of}
\usepackage{hyperref}
\usepackage[title]{appendix}

\usepackage{amstext,amsmath,amssymb,amsfonts}
\usepackage{colortbl}
\usepackage{xspace,varioref}
\usepackage{hyperref}

\usepackage[dvipsnames]{xcolor}
\usepackage{graphicx}

\usepackage{multicol}
\usepackage{algorithm} 
\usepackage[noend]{algpseudocode}

\usepackage{enumitem}
\usepackage{stmaryrd}

\pagestyle{plain}

\usepackage[most]{tcolorbox}
\newtcolorbox{mybox}[2][]{%
  attach boxed title to top left
               = {yshift=-7pt, xshift=15pt},
  colback      = white!5!white,
  colframe     = black!75!black,
  fonttitle    = \bfseries,
  colbacktitle = gray!85!black,
  title        = #2,#1,
  enhanced,
}

\newcommand{\latinloc}[1]{\ifx\undefined\lncs\relax\emph{#1}\else\textrm{#1}\fi\xspace}

\newcommand{\eg}{\latinloc{e.g.}}
\newcommand{\ie}{\latinloc{i.e.}}
\newcommand{\etal}{\textit{et al}. }

\makeatletter
\renewcommand*{\ALG@name}{Algorithm}
\makeatother

 

\DeclareMathOperator{\DB}{\mathbf{DB}}
\DeclareMathOperator{\PrivDB}{\mathbf{PrivDB}}
\DeclareMathOperator{\MultDB}{\mathbf{MultDB}}
\DeclareMathOperator{\PrivMultDB}{\mathbf{PrivMultDB}}
\DeclareMathOperator{\ByzPrivMultDB}{\mathbf{ByzPrivMultDB}}
\DeclareMathOperator{\ByzMultDB}{\mathbf{ByzMultDB}}
\DeclareMathOperator{\SMR}{\mathbf{SMR}}
\DeclareMathOperator{\ISMR}{\mathbf{ISMR}}
\DeclareMathOperator{\USMR}{\mathbf{USMR}}
\DeclareMathOperator{\cUSMR}{\mathbf{cUSMR}}

\DeclareMathOperator{\UpdKey}{\mathbf{UpdKey}}

\DeclareMathOperator{\Enc}{\mathsf{Enc}}
\DeclareMathOperator{\Dec}{\mathsf{Dec}}
\DeclareMathOperator{\KG}{\mathsf{KG}}
\DeclareMathOperator{\TG}{\mathsf{TG}}
\DeclareMathOperator{\Upd}{\mathsf{Upd}}

\DeclareMathOperator{\UE}{\mathsf{UE}}
\DeclareMathOperator{\enc}{\mathsf{enc}}
\DeclareMathOperator{\upd}{\mathsf{upd}}

\DeclareMathOperator{\MS}{\mathcal{M}}
\DeclareMathOperator{\CS}{\mathcal{C}}
\DeclareMathOperator{\KS}{\mathcal{K}}
\DeclareMathOperator{\TS}{\mathcal{T}}

\newcommand{\I}{\mathbf{I}}
\newcommand{\R}{\mathbf{R}}
\newcommand{\D}{\mathbf{D}}
\renewcommand{\H}{\mathbf{H}}
\newcommand{\CC}{\mathbf{C}}
\newcommand{\G}{\mathbf{G}}
\renewcommand{\SS}{\mathbf{S}}
\newcommand{\A}{\mathbf{A}}

\newcommand{\C}{\mathsf{C}}

\newcommand{\W}{\mathsf{W}}
\renewcommand{\S}{\mathsf{S}}
\newcommand{\uec}{\mathsf{ue}_{\mathsf{cli}}}
\newcommand{\ues}{\mathsf{ue}_{\mathsf{ser}}}
\newcommand{\honSrv}{\mathsf{honSrv}}

\newcommand{\K}{\mathcal{K}}
\newcommand{\DD}{\mathcal{D}}
\newcommand{\T}{\mathcal{T}}

\newcommand{\EE}{\mathcal{E}}

\newcommand{\IR}{\mathcal{IR}}
\newcommand{\FW}{\mathcal{FW}}
\renewcommand{\AA}{\mathcal{A}}
\newcommand{\RR}{\mathcal{R}}
\newcommand{\Q}{\mathcal{Q}}
\newcommand{\N}{\mathbb{N}}

\newcommand{\KK}{\mathbb{K}}
\newcommand{\TT}{\mathbb{T}}

\newcommand{\leaked}{\mathtt{leaked}}
\newcommand{\epoch}{\mathtt{epoch}}

\newcommand{\Key}{\mathsf{Key}}
\newcommand{\Token}{\mathsf{Token}}
\newcommand{\Data}{\mathsf{Data}}
\newcommand{\insec}{\mathtt{insec}}
\newcommand{\ENCCPA}{{\mathsf{ENC}\text{-}\mathsf{CPA}}}
\newcommand{\UPDCPA}{{\mathsf{UPD}\text{-}\mathsf{CPA}}}
\newcommand{\UECPA}{{\mathsf{IND}\text{-}\mathsf{UE}\text{-}\mathsf{CPA}}}

\newcommand{\CPA}{\mathsf{CPA}}
\newcommand{\CPAp}{\mathsf{CPA^+}}
\newcommand{\CCA}{\mathsf{CCA}}

\newcommand{\UPD}{\mathsf{UPD}}
\newcommand{\ENC}{\mathsf{ENC}}
\newcommand{\fwl}{\mathsf{fwl}}
\newcommand{\fwr}{\mathsf{fwr}}
\newcommand{\M}{\mathbb{M}}
\newcommand{\Hist}{\textsc{Hist}}
\newcommand{\Init}{\textsc{Init}}
\newcommand{\Active}{\textsc{Active}}

\allowdisplaybreaks







\bibliographystyle{plainurl}

\date{\today}
\title{Interactivity in Constructive Cryptography : Modeling and Applications to Updatable Encryption and Private Information Retrieval}

\author{Françoise Levy-dit-Vehel\inst{1} \and Maxime Roméas\inst{2}}

\authorrunning{F. Levy-dit-Vehel and M. Roméas}






\institute{LIX, ENSTA Paris, INRIA, Institut Polytechnique de Paris, 91120 Palaiseau, France.
  \email{levy@ensta.fr} \and
  LIX, \'Ecole polytechnique, INRIA, Institut Polytechnique de Paris, 91120 Palaiseau, France.
\email{romeas@lix.polytechnique.fr}}

\begin{document}

\maketitle

\begin{abstract}

  In this work, we extend the Constructive Cryptography (CC) framework introduced by Maurer in 2011 so as to handle interactive protocols.
  We design and construct a so-called {\em Interactive Server Memory Resource} (ISMR), that is an augmented version of the basic instantiation of a client-server protocol in CC, namely the Server Memory Resource.
We then apply our ISMR construction to two types of interactive cryptographic protocols for remote storage : Updatable Encryption (UE) and Private Information Retrieval (PIR).

Concerning UE, our results are a composable version of those protocols, clarifying the security guarantees achieved by {\em any} UE scheme. Namely, we give the relevant security notion to consider according to a given leakage context. Letting USMR denote our ISMR adapted to the UE application,
we prove that $\mathsf{IND}\text{-}\mathsf{UE}\text{-}\mathsf{CPA}$ security is sufficient for a secure construction of a confidential USMR that hides the age of ciphertexts; and  $\mathsf{IND}\text{-}(\mathsf{ENC}+\mathsf{UPD})\text{-}\mathsf{CPA}$ security is sufficient for a secure construction of a confidential USMR in case of unrestricted leakage. As a consequence, contrary to what was claimed before, the $\mathsf{IND}\text{-}\mathsf{UE}$ security notion is not always stronger than the $\mathsf{IND}\text{-}(\mathsf{ENC+UPD})$ one.

  Another contribution to CC is a method of proof handling asymmetric challenges,
  namely game pairs of the form $(m,c)$, i.e. (plaintext, ciphertext), where the oracle either answers an encryption of $m$ or an update of $c$.
  Those occur for instance when proving $\mathsf{IND}\text{-}\mathsf{UE}\text{-}\mathsf{CPA}$ security.
  
  Concerning PIR, we also give a composable version of PIR protocols, yielding a unique model that unifies different notions of PIR : IT-PIR, C-PIR, one- or multi- server PIR. Using the flexibility of CC, we are also able to model PIR variants, such as SPIR.

\end{abstract}

\section{Introduction}
\label{sec:org301d508}

In this work, we are concerned with ensuring privacy in remote storage contexts. To do so, we start from the modeling of the client-server setting for non-interactive protocols in Constructive Cryptography (CC) by Badertscher and Maurer \cite{badertscher18_compos_robus_outsour_storag}.

CC is a composable framework that was introduced by Maurer \cite{maurer12_const_crypt_new_parad_secur_defin_proof}. Its \textit{top-down} approach makes it very intuitive and nice to work with. Badertscher and Maurer \cite{badertscher18_compos_robus_outsour_storag} first used it in the outsourced storage setting to clarify some security models and build protocols in a modular fashion. Jost \etal \cite{jost19_unified_compos_take_ratch} introduced the notion of \textit{global event history} to the theory and give a first treatment of adaptive security. This work was then pursued \cite{jost20_overc_impos_resul_compos_secur} by introducing interval-wise security guarantees in order to find a way to overcome impossibility results such as the commitment problem.

We extend this model by introducing and modeling interactivity in CC. We make this possible by proposing new ways of using of so-called {\em converters} and {\em simulators}.
We illustrate the power and the flexibility of our model on two privacy preserving and interactive schemes : Updatable Encryption (UE) and Private Information Retrieval (PIR). We make their modeling possible by introducing a new proof technique. We think that our treatment of interactivity in CC will permit to more easily build and to better understand the security guarantees of interactive protocols.
\smallskip

\noindent
\emph{Updatable Encryption}. UE allows a client, who outsourced his encrypted data, to make an untrusty server update it. The huge real-life applications of such a functionality explains the recent renew of interest on the subject \cite{everspaugh17_key_rotat_authen_encry,lehmann18_updat_encry_post_compr_secur,klooss19_r_cca_secur_updat_encry_integ_protec,jiang20_direc_updat_encry_does_not_matter_much,boyd20_fast_secur_updat_encry}.
The concept and definition of UE first appeared in a paper by Boneh \etal in 2013 \cite{boneh13_key_homom_prfs_their_applic}, as an application of key homomorphic PRFs to the context of rotating keys for encrypted data stored in the cloud.

We are here interested in the \textit{ciphertext-independent} variant of UE, \ie, the one in which a token dependent only on the old and new keys is used to update all ciphertexts. This variant minimizes the communication complexity needed when doing key-rotation in the outsourced storage setting. Lehmann and Tackmann \cite{lehmann18_updat_encry_post_compr_secur} gave the first rigorous security treatment of those schemes by introducing the $\mathsf{IND}\text{-}\ENCCPA$ and $\mathsf{IND}\text{-}\UPDCPA$ security games alongside an Elgamal based secure UE scheme called \textsf{RISE}. Kloo{\ss} \etal \cite{klooss19_r_cca_secur_updat_encry_integ_protec} strengthened these security notions in the context of chosen-ciphertext security by introducing $(\mathsf{R})\mathsf{CCA}$ games and generic constructions on how to achieve UE schemes meeting their new definitions. Boyd \etal \cite{boyd20_fast_secur_updat_encry} introduced $\UECPA$, a strictly stronger security notion, in that it claims to hide the age of the data, and showed relations between all existing UE security notions. They also propose a generic construction for $\CCA$ secure schemes alongside the \textsf{SHINE} family of fast UE schemes based on random looking permutations and exponentiation. Finally, Jiang \cite{jiang20_direc_updat_encry_does_not_matter_much} showed that whether the token allowed to only upgrade or also downgrade ciphertexts did not make a difference in terms of security. Moreover, they propose the first post-quantum, LWE based, UE scheme. 
\smallskip

\noindent
\emph{Private Information Retrieval}. A PIR scheme \cite{CGKS95,chor98_privat_infor_retriev} involves one or more servers, holding a database with entries $x_1, ..., x_n$ and a client, who is interested in retrieving the $i$-th entry of the database, $i\in \lbrace, \ldots, n\rbrace$.
 The scheme allows the client to retrieve $x_i$ while hiding the index $i$ from the servers and using communication which is sublinear in $n$. There is a huge literature surrounding PIR, starting from the seminal work of Chor \etal \cite{CGKS95}.
 PIR comes in different flavors, one- or multi- server, information-theoretically secure PIR (IT-PIR) \cite{A97,CG97,BI01,BIKR02,Y08,E12,augot14_storag_effic_robus_privat_infor,DG16} and computationally secure PIR (cPIR) \cite{KO97,CMS99,KO00,GR05,OS07}. Note that IT-PIR can only be achieved with multiple servers.
 There are many more variants of PIR such as symmetric PIR \cite{GIKM00} where the client cannot learn any information about the entries of the database other than $x_i$ or batch PIR \cite{IKOS04,H16} whose aim is to amortize the computational cost of the servers over a batch of queries made by the client. In PIR, one can allow coalitions of servers if their size is under a given threshold.
 Finally, PIR has a lot of applications in cryptographic protocols : in private messaging \cite{SCM05,AS16,ACLS18}, online anonymity \cite{MOTBG11,KLDF16}, targeted advertising \cite{J01} and many more.

\subsection{Contributions}
We extend the Constructive Cryptography model \cite{maurer12_const_crypt_new_parad_secur_defin_proof} so as to handle interactive protocols.

We call {\bf Interactive Server Memory Resource ($\ISMR$)} the {\em resource} we design, that constitutes the core of interactivity.
We make interactivity possible in the basic client-server setting of [1] by modifying the functionalities of converters : in addition to their basic features, they can now be used to modify the number or type of parameters of a query. As well, such converters have the ability to transiently deactivate an interface. We also introduce and model what we call {\em semi-honest} (commonly named honest-but-curious) interfaces, in which simulators can be plugged. This permits us to precisely model and describe the behaviour of semi-honest adversaries, as well as byzantine ones.

Another contribution to CC is a new improvement of a method of proof by Coretti \etal \cite{coretti13_const_confid_chann_authen_chann}: to prove the security of the different constructions, we define a sequence of hybrid systems, the first being the real one equipped with the protocol, the last the ideal one it is supposed to achieve. Then, we use a second hybrid argument between each consecutive systems of the sequence. Our proof thus evolves from a conjunction of small security-game proofs. Moreover, this allows us to handle an asymmetry : indeed, to prove $\mathsf{IND}\text{-}\mathsf{UE}\text{-}\mathsf{CPA}$ security, we have to handle game pairs of the form $(m,c)$, \ie (plaintext, ciphertext), where the oracle either answers an encryption of $m$ or an update of $c$. Those pairs cannot be handled by the classical game proofs proposed in \cite{coretti13_const_confid_chann_authen_chann}, which only treat requests of the form $(m,m')$ (mimicing their proofs in our case, the distinguishing advantage would be 1).

We then apply our interactivity in CC - in essence, our $\ISMR$ construction - to two types of cryptographic protocols for remote storage : UE and PIR.

Concerning UE, our results are a composable version of UE, clarifying the security guarantees achieved : Indeed, our modelization of UE in CC permits to give the security achieved by {\em any} UE scheme. Note that, in the only previous work we know of, that considers UE in the CC context \cite{FMM21}, the construction proposed was only valid for particular instantiations of UE schemes.
 More precisely, letting $\USMR$, for Updatable Server Memory Resource, denote our $\ISMR$ adapted to the UE application,
 our result is the following (theorem 1, rephrased): starting from a $\USMR$ on which we plug a converter modeling any UE scheme, we prove that the security of the new resource obtained reduces to the security of the UE protocol with respect to :
 \begin{itemize}
 \item[] the $\UECPA$ game, in the case when there is at least one leak per entry of the database per epoch;
 \item[] the $\mathsf{IND}\text{-}(\mathsf{ENC}+\mathsf{UPD})\text{-}\mathsf{CPA}$ games, in the any number of leaks case.
   \end{itemize}
   Said differently, we show that $\UECPA$ security is sufficient for a secure construction of a confidential $\USMR$ that hides the age of ciphertexts; and  $\mathsf{IND}\text{-}(\mathsf{ENC}+\mathsf{UPD})\text{-}\mathsf{CPA}$ security is sufficient for a secure construction of a confidential $\USMR$ in case of unrestricted leakage.
 
    Giving the relevant security notion to consider according to a given leakage context, as done above, we show that the $\mathsf{IND}\text{-}\mathsf{UE}$ security notion of Boyd \etal \cite{boyd20_fast_secur_updat_encry} is not always stronger than the one of \cite{lehmann18_updat_encry_post_compr_secur}, namely $\mathsf{IND}\text{-}(\mathsf{ENC}+\mathsf{UPD})$. 

 It is worth mentioning that our work on UE rules out the tedious commitment problem, that usually occurs when dealing with key exposures in composable frameworks.

 Concerning PIR, we :
 \begin{itemize}
 \item[-] give a composable version of PIR protocols;
  \item[-] propose a unique model, that unifies different notions of PIR : IT-PIR, C-PIR, one- or multi- server PIR. Indeed, the CC framework allows to handle IT-security contexts and computational ones at the same time. That is one notable difference with the UC framework;
  \item[-] are able to model PIR variants, thanks to the modularity of CC. For instance for the SPIR variant, using specification intersections, we can tune adversarial guarantees.
  \end{itemize}

\subsection{Organization of the paper}
\label{sec:org124d6ba}

Section \ref{sec:orge287bdc} presents the background needed to understand our work.
In section \ref{sec:ismr} we present our treatment of interactivity in the CC model.
In particular, we describe our Interactive Server-Memory Resource and our novel uses for converters and simulators.
We instantiate it to the UE setting, and that permits us to give a composable treatment of UE schemes, and to analyze their security properties w.r.t. different leakage contexts in section \ref{sec:org577f7f9}.
In section \ref{sec:pir}, we instantiate it to PIR protocols and give a composable and unified treatment of information-theoretic, computational and one or multi-server PIR.
In the multiple server case, we also model byzantine coalitions. Section \ref{sec:ccl} concludes the paper.

\section{Background}
\label{sec:orge287bdc}
\subsection{The Constructive Cryptography model}
\label{sec:orge0e968d}


The CC model, introduced by Maurer \cite{maurer11_abstr_crypto} in 2011, and augmented since then \cite{maurer12_const_crypt_new_parad_secur_defin_proof,maurer16_from_indis_to_const_crypt_and_back,jost19_unified_compos_take_ratch,jost20_overc_impos_resul_compos_secur} aims at asserting the real security of cryptographic primitives. To do so, it redefines them in terms of so-called {\em resources} and {\em converters}.

In this model, starting from a basic resource (e.g. communication channel, shared key, memory server...), a converter (a cryptographic protocol) aims at constructing an enhanced resource, \ie, one with better security guarantees. The starting resource, lacking the desired security guarantees, is often called the \textit{real} resource and the obtained one is often called the \textit{ideal} resource, since it does not exist as is in the real world. An example of such an ideal resource is a confidential server, where the data stored by a client is readable by this client only. The only information that leaks to other parties is its length. This resource does not exist, but it can be emulated by an insecure server on which the client uses an encryption protocol where the encryption scheme is $\mathsf{IND}-\mathsf{CPA}$ secure. We say that this \textit{construction} of the confidential server is secure if the real world - namely, the insecure server together with the protocol - is \textit{just as good} as the ideal world - namely, the confidential server.
This means that, whatever the adversary can do in the real world, it could as well do in the ideal world. We use the fact that the ideal world is by definition secure and contraposition to conclude.

The CC model follows a top-down approach, allowing to get rid of useless hypotheses made in other models. A particularity of this model is its composability, in the sense that a protocol obtained by composition of a number of secure constructions is itself secure. In the following, we follow the presentation of \cite{jost20_overc_impos_resul_compos_secur}.

\subsubsection{Global Event History}
\label{sec:org91eefa4}

This work uses the globally observable events introduced in \cite{jost19_unified_compos_take_ratch}. Formally, we consider a \emph{global event history} \(\EE\) which is a list of event without duplicates. An event is defined by a name \(n\), and triggering the event \(n\) corresponds to the action of appending \(n\) to \(\EE\), denoted by \(\EE \xleftarrow{+} \EE_n\). For short, we use the notation \(\EE_n\) to say that event \(n\) happened. Finally, $\EE_{n} \prec \EE_{n'}$ means that the event $n$ precedes $n'$ in the event history.

\subsubsection{Resources, Converters and Distinguishers}

A \emph{resource} \(\R\) is a system that interacts, in a black-box manner, at one or more of its \emph{interfaces}, by receiving an input at a given interface and subsequently sending an output at the same interface. Do note that a resource only defines the observable behavior of a system and not how it is defined internally. The behavior of the resource depends on the global event history \(\EE\) and it can append events to it. We use the notation \([\R_1, \ldots, \R_k]\) to denote the parallel composition of resources. It corresponds to a new resource and, if \(\R_1, \ldots, \R_k\) have disjoint interfaces sets, the interface set of the composed resource is the union of those.
\smallskip

In CC, \emph{converters} are used to link resources and reprogram interfaces, thus expressing the local computations of the parties involved. A converter is plugged on a set of interfaces at the inside and provides a set of interfaces at the outside. When it receives an input at its outside interface, the converter uses a bounded number of queries to the inside interface before computing a value and outputting it at its outside interface.

A converter \(\pi\) connected to the interface set \(\mathcal{I}\) of a resource \(\R\) yields a new resource \(\R' := \pi^\mathcal{I}\R\). The interfaces of \(\R'\) inside the set \(\mathcal{I}\) are the interfaces emulated by \(\pi\). A protocol can be modelled as a tuple of converters with pairwise disjoint interface sets.
\smallskip

A \emph{distinguisher} \(\D\) is an environment that connects to all interfaces of a resource \(\R\) and sends queries to them. \(\D\) has access to the global event history and can append events that cannot be added by \(\R\). At any point, the distinguisher can end its interaction by outputting a bit. The advantage of a distinguisher is defined as

\[ \Delta^\D(\R, \SS) := \vert \Pr[\D^\EE(\R) = 1] - \Pr[\D^\EE(\SS) = 1] \vert, \]
\(\D^\EE\) meaning that the distinguisher has oracle access to the global event history \(\EE\).

\subsubsection{Specifications}
\label{sec:org5986b8e}

An important concept of CC is the one of \emph{specifications}. Systems are grouped according to desired or assumed properties that are relevant to the user, while other properties are ignored on purpose. A specification \(\mathcal{S}\) is a set of resources that have the same interface set and share some properties, for example confidentiality. In order to construct this set of confidential resources, one can use a specification of assumed resources \(\mathcal{R}\) and a protocol \(\pi\), and show that the specification \(\pi\mathcal{R}\) satisfies confidentiality. Proving security is thus proving that \(\pi\mathcal{R} \subseteq \mathcal{S}\), sometimes written as \(\mathcal{R} \xrightarrow{\pi} \mathcal{S}\), and we say that the protocol \(\pi\) constructs the specification \(\mathcal{S}\) from the specification \(\mathcal{R}\). The composition property of the framework comes from the transitivity of inclusion. Formally, for specifications \(\mathcal{R}, \mathcal{S}\) and \(\mathcal{T}\) and protocols \(\pi\) for \(\mathcal{R}\) and \(\pi'\) for \(\mathcal{S}\), we have $\mathcal{R} \xrightarrow{\pi} \mathcal{S} \wedge \mathcal{S}\xrightarrow{\pi'}\mathcal{T}\Rightarrow \mathcal{R}\xrightarrow{\pi'\circ\pi}\mathcal{T}$.

We use the real-world/ideal-world paradigm, and often refer to \(\pi\mathcal{R}\) and \(\mathcal{S}\) as the real and ideal-world specifications respectively, to understand security statements. Those statements say that the real-world is "just as good" as the ideal one, meaning that it does not matter whether parties interact with an arbitrary element of \(\pi\mathcal{R}\) or one of \(\mathcal{S}\). This means that the guarantees of the ideal specification \(\mathcal{S}\) also apply in the real world where an assumed resource is used together with the protocol.

Since specifications are set of resources, we can consider the intersection $\mathcal{S}\cap\mathcal{T}$ of two specifications $\mathcal{S}$ and $\mathcal{T}$. The resulting specification possesses the guarantees of both $\mathcal{S}$ and $\mathcal{T}$.

In this work, we use \emph{simulators}, \ie, converters that translate behaviors of the real world to the ideal world, to make the achieved security guarantees obvious. For example, one can model confidential servers as a specification \(\mathcal{S}\) that only leaks the data length, combined with an arbitrary simulator \(\sigma\), and show that \(\pi\mathcal{R} \subseteq \sigma\mathcal{S}\). It is then clear that the adversary cannot learn anything more that the data length.

\subsubsection{Relaxations}
\label{sec:org4da397b}

In order to talk about computational assumptions, post-compromise security or other security notions, the CC framework relies on \emph{relaxations} which are mappings from specifications to larger, and thus weaker, \emph{relaxed specifications}.
The idea of relaxation is that, if we are happy with constructing specification \(\mathcal{S}\) in some context, then we are also happy with constructing its relaxed variant. One common example of this is computational security. Let \(\epsilon\) be a function that maps distinguishers \(\D\) to the winning probability, in \([0,1]\), of a modified distinguisher \(\D'\) (the reduction) on the underlying computational problem. Formally,

   \begin{definition}
     Let $\epsilon$ be a function that maps distinguishers to a value in $[0,1]$. Then, for a resource $\R$, the reduction relaxation $\R^\epsilon$ is defined as
    \[ \R^\epsilon := \lbrace \SS \mid \forall \D, \Delta^\D(\R, \SS) \leq \epsilon(\D)\rbrace  \]
This (in fact any) relaxation can be extended to a specification $\mathcal{R}$ by defining $\mathcal{R}^\epsilon := \cup_{\R\in \mathcal{R}}{\R^\epsilon}$.
   \end{definition}

   The other relaxation that we will use is the \textit{interval-wise relaxation} introduced in \cite{jost20_overc_impos_resul_compos_secur}. Given two predicates $P_1(\EE)$ and $P_2(\EE)$ on the global event history, the interval-wise relaxation $\R^{[P_1, P_2]}$ is the set of all resources that must behave like $\R$ in the time interval starting when $P_1(\EE)$ becomes true and ending when $P_2(\EE)$ becomes true. Outside this interval, we have no guarantees on how the resources behave.

   The two relaxations we use have nice composition properties, mainly they are compatible together and with parallel and sequential protocol applications, as shown in \cite{jost20_overc_impos_resul_compos_secur}. This means that all the constructions presented in this work can be used in a modular fashion inside bigger constructions, without needing to write a new security proof.

   \subsection{Updatable Encryption}
   \label{sec:orgda1dca6}

   \subsubsection{Definitions}

We follow the syntax given in \cite{boyd20_fast_secur_updat_encry}. A UE scheme is given by a tuple of algorithms $(\KG$, $\TG$, $\Enc$, $\Dec$, $\Upd)$ operating in epochs over message space \(\MS\), ciphertext space \(\CS\), key space \(\KS\) and token space \(\TS\). For a \emph{correct} UE scheme, these algorithms work as follows :

\begin{itemize}
\item On input \(1^\lambda\) where \(\lambda\) is a security parameter, \(\KG(1^\lambda)\in \KS\) returns a key.
\item On input \((k_e, k_{e+1})\in \KS^2\), \(\TG(k_e, k_{e+1})\in \TS\) returns a token \(\Delta_{e+1}\).
\item On input a plaintext \(m\in \MS\) and \(k_e \in \KS\), \(\Enc_{k_e}(m)\in \CS\) returns an encryption of \(m\) under the key \(k_e\) of epoch \(e\).
\item On input a ciphertext \(c\in \CS\) encrypting a message \(m\in \MS\) under a key \(k_e\in \KS\), \(\Dec_{k_e}(c)\in \MS\) returns the message \(m\). If \(c\) is an invalid ciphertext, a decryption error \(\diamond\) is returned instead.
\item On input a ciphertext \(c_e\in \CS\) encrypting a message \(m\in \MS\) under a key \(k_e\in \KS\) and a token \(\Delta_{e+1}\in \TS\) computed with the pair \((k_e, k_{e+1})\in \KS^2\), \(\Upd_{\Delta_{e+1}}(c)\in \CS\) returns ciphertext $c_{e+1}$. Correctness of the UE scheme requires that $c_{e+1}$ is an encryption of \(m\) under the key \(k_{e+1}\).
\end{itemize}
For a precise definition of correctness, see \cite{lehmann18_updat_encry_post_compr_secur}.

\subsubsection{Security notions}
\label{sec:org2439924}

In this work, we study what confidentiality guarantees are exactly brought by the use of UE schemes. In previous works, the security of UE is described using security games. We will analyze the differences between the \(\mathsf{IND}\text{-}\mathsf{ENC} + \mathsf{IND}\text{-}\mathsf{UPD}\) security notion of \cite{lehmann18_updat_encry_post_compr_secur} and the \(\mathsf{IND}\text{-}\mathsf{UE}\) one of \cite{boyd20_fast_secur_updat_encry}. 
\smallskip

In \(\mathsf{IND}\text{-}\mathsf{UE}\) security, when given a plaintext \(m\) and a ciphertext \(c\) from a previous epoch encrypting a message of length $\vert m\vert$, the game challenges the adversary with either an encryption of $m$ or an update of $c$.

In \(\mathsf{IND}\text{-}\mathsf{ENC}\) security,  when given two plaintexts \(m_0\) and \(m_1\) of same length, the game challenges the adversary with an encryption of one of them.

In \(\mathsf{IND}\text{-}\mathsf{UPD}\) security, when given two old ciphertexts \(c_0\) and \(c_1\) encrypting two messages of the same length, the game challenges the adversary with an update of one of them.
\smallskip

We will study the confidentiality of UE schemes in the $\CPA$ setting of \cite{boyd20_fast_secur_updat_encry} where the adversary has access to both an encryption and an update oracle. We recall the different game oracles used in the $\CPA$ setting of \cite{boyd20_fast_secur_updat_encry} in fig.\ref{oracles} of the appendix.

\subsection{Private Information Retrieval}
\label{sec:pir_bg}

\subsubsection{Notation and definitions}

In this work, we model the content of a database as an array $\M$ of size $n$ over some alphabet $\Sigma$. A $k$-server PIR protocol involves $k$ servers $\S_1, \ldots, \S_k$ each holding a database $\M_j (1\leq j\leq k)$ that can be either $\M$ duplicated or a modified (\eg encrypted or encoded) version of $\M$. The user wants to retrieve some record $\M[i]$ for $1\leq i\leq n$, without revealing any information about $i$ to the servers. Formally, we have the following definition \cite{chor98_privat_infor_retriev}.

\begin{definition}[Private Information Retrieval]
A $k$-server PIR protocol is a triple $(\Q, \AA, \RR)$ of algorithms such that :
\begin{itemize}
        \item The user samples a random string $s$ with distribution $\mathcal{S}$, we denote this operation by $s \twoheadleftarrow \mathcal{S}$. The user can then use the algorithm $\Q$ to generate a $k$-tuple of queries $\Q(i, s) := (q_1, \ldots, q_k)$ and send each one to the corresponding server.
        \item Each server $\S_j$ computes an answer $\AA(j, \M_j, q_j) := a_j$ and sends it back to the user.
        \item The user can recover $\M[i]$ by invoking the reconstruction algorithm $\RR(a_1$, $\ldots$, $a_k$, $i$, $s)$
        \end{itemize}
        \end{definition}

Furthermore, the protocol is at least required to have the following {\em correctness} and {\em $t$-privacy} properties, where :

\begin{itemize}
        \item (Correctness) For all databases $\M\in \Sigma^n$, for every index $1\leq i\leq n$, the user recovers $\M[i]$, \ie, $\RR(\AA(1, \M_1, q_1), \ldots, \AA(k, \M_k, q_k), i, s) = \M[i]$ where $(q_1, \ldots, q_k) := \Q(i, s)$ and $s \twoheadleftarrow \mathcal{S}$.
        \item ($t$-Privacy) No coalition of at most $t$ servers can obtain any information about $i$.
\end{itemize}

\section{The Interactive Server Memory Resource}
\label{sec:ismr}

In \cite{badertscher18_compos_robus_outsour_storag}, Maurer and Badertscher propose an instantiation in CC of the client-server setting for non-interactive protocols. They notably introduce the basic Server Memory Resource (\(\SMR\) for short), where an honest client can read and write data on an outsourced memory owned by a potentially dishonest server. We extend their work to interactive protocols in the client-server setting by introducing a different type of \(\SMR\), namely the Interactive Server-Memory Resource (\(\ISMR\) for short). By interactive, we mean any protocol where the client sends a query and expects that the server will perform a given computation upon reception. For example, in computational PIR, the server is expected to answer with the result of a computation involving the whole database and the client's query.

One of the main differences with $\SMR$ is that, in \(\ISMR\), the server is now considered as semi-honest (through its interface \(\S\)), since it has to participate in the interactive protocol when receiving orders from the client. In \cite{badertscher18_compos_robus_outsour_storag}, the client was the only party to take part in the protocol, the role of the server being limited to storing data. In our case, the server agrees to apply a converter which carries out the computations requested by the client. We gather these capabilities in a sub-interface of \(\S\) denoted by \(\S.1\). Since the server is only semi-honest, we have no guarantees about how it will use its remaining capabilities. We gather those under the sub-interface \(\S.2\).

In the \(\ISMR\), we model interactions by letting the client interact with the server, through its interface \(\C\), according to a boolean value $\textsc{NeedInteraction}$. The server, through its sub-interface $\S.1$, is given the capability of checking this boolean to see if it needs to take actions, act if necessary and then switch the boolean value back. While the \(\ISMR\) is waiting for an interaction to end, \ie as long as $\textsc{NeedInteraction}$ is set to \(\texttt{true}\), the capabilities of the client at its interface are disabled. Our approach is general enough for the \(\ISMR\) to be used when modeling all kinds of interactive protocols in the outsourced storage setting. We can model interactive protocols that are information-theoretically, statistically or computationally secure, multi-client and multi-server protocols as well as passive or active adversaries. We showcase this by modeling UE schemes in sec.~\ref{sec:org577f7f9} and PIR schemes in sec.~\ref{sec:pir}. A formal description of the \(\ISMR\) is given in fig.~\ref{ISMR}.

\begin{figure}[htbp]
  \begin{mybox}{Resource $\ISMR_{\Sigma, n}$}
        \vspace{-0.4cm}
\begin{algorithm}[H]
\renewcommand{\thealgorithm}{}
\makeatletter
\renewcommand*{\ALG@name}{Initialization}
\makeatother
\caption{Initialization}
\begin{algorithmic}
	\State {$\textsc{NeedInteraction} \leftarrow \texttt{false}, \mathbb{M} \leftarrow [~]$}
\end{algorithmic}
\end{algorithm}

\vspace{-0.8cm}

\begin{algorithm}[H]
\renewcommand{\thealgorithm}{}
\makeatletter
\renewcommand*{\ALG@name}{Interface}
\makeatother
\caption{Interface $\mathsf{C}$}
\begin{algorithmic}
\Require {$(\texttt{read}, i)\in [n]$}
  \If {\textbf{not }\textsc{NeedInteraction}}
    \State {\Return $\M[i]$}
  \EndIf
\Require {$(\texttt{write}, i, x)\in [n] \times \Sigma$}
  \If {$\textbf{not }\textsc{NeedInteraction}$}
    \State {$\mathbb{M}[i] \leftarrow x$}
  \EndIf
\Require {$\texttt{askInteraction}$}
  \If {$\textbf{not }\textsc{NeedInteraction}$}
    \State {$\textsc{NeedInteraction} \leftarrow \texttt{true}$}
  \EndIf
\Require {$\texttt{getStatus}$}
  \State {\Return $\textsc{needInteraction}$}
\end{algorithmic}
\end{algorithm}

\vspace{-0.8cm}

\begin{algorithm}[H]
\renewcommand{\thealgorithm}{}
\makeatletter
\renewcommand*{\ALG@name}{Sub-Interface}
\makeatother
\caption{Sub-Interface $\mathsf{S}.1$ of \textbf{Interface } $\mathsf{S}$}
\begin{algorithmic}
\Require {$(\texttt{read}, i) \in [n]$}
  \State {\Return $\mathbb{M}[i]$}
\Require {$(\texttt{write}, i, x) \in [n]\times \Sigma$}
  \State {$\mathbb{M}[i] \leftarrow x$}
\Require {$\texttt{interact}$}
  \If {$\textsc{needInteraction}$}
    \State {$\textsc{needInteraction} \leftarrow \texttt{false}$}
  \EndIf
\end{algorithmic}
\end{algorithm}

\vspace{-0.8cm}

\begin{algorithm}[H]
\renewcommand{\thealgorithm}{}
\makeatletter
\renewcommand*{\ALG@name}{Sub-Interface}
\makeatother
\caption{Sub-Interface $\mathsf{S}.2$ of \textbf{Interface } $\mathsf{S}$}
\begin{algorithmic}
\Require {$(\texttt{leak}, i)\in [n]$ \quad // other adversarial capabilities can be added}
  \State {\Return $\mathbb{M}[i]$}
\Require {$\texttt{getStatus}$}
  \State {\Return $\textsc{needInteraction}$}
\end{algorithmic}
\end{algorithm}
\end{mybox}
\caption{The interactive Server-Memory Resource with finite alphabet $\Sigma$ and size $n$. The sub-interface $\S.1$ guarantees that the server follows a protocol through the application of a converter. On the other hand, no guarantees are given at the sub-interface $\S.2$. \label{ISMR}}
\end{figure}

We are able to model all these different protocols by taking advantage of the power and the flexibility of the CC framework together with our new $\ISMR$. A key point of our work is to propose new ways to use converters and simulators in CC.

First, we show how one can use a converter to modify the number and/or the types of arguments required when sending a query. For example, in PIR, a client wants to retrieve the $i$-th entry of a database without revealing $i$ to the server. Thus, the client will not send $i$ directly to the server. Instead, the client will perform a computation involving $i$ and some secret and send the result $q$ to the server where $q$ belongs to some set $S$. The real resource must account for this by implementing a query $(\texttt{query}, q)\in S$ at the client's interface. The issue is that the final goal of the protocol, modeled in the ideal resource, is to retrieve the $i$-th entry of a database of size $n$, which needs to be modeled as a query $(\texttt{query}, i)\in [1, n]$. We bridge the gap between the real and ideal resources by allowing the client's converter (used in the real-world) to reprogram the $(\texttt{query}, q)\in S$ query into a $(\texttt{query}, i)\in [1, n]$ one. Moreover, we also allow the converter to increase or decrease the number of arguments of a query. See sec.~\ref{sec:org577f7f9} for more details.

Secondly, we show how one can use a converter to disable a capability at an interface. This is particularly useful when dealing with semi-honest adversaries. Indeed, such an adversary participates in the protocol by carrying computations for the client. This is modeled by the application of a converter. For example, in UE schemes, the server is expected to update ciphertexts for the client. To carry out its computation, the server must be able to retrieve the update token sent by the client. This is modeled by a $(\texttt{fetchToken}, e)$ query. In UE schemes, this operation is not considered to be malicious. Thus, we add a $(\texttt{leakToken}, e)$ query to model a malicious access to the update token (with a different behavior than the $(\texttt{fetchToken}, e)$ query). Then, we need to disable the $(\texttt{fetchToken}, e)$ query to prevent an adversary from accessing the update token without querying $(\texttt{leakToken}, e)$. Since our modelization of interactivity allows us to use converters on semi-honest interfaces, we can use a converter to disable the $(\texttt{fetchToken}, e)$ query at the outside interface of the server. See sec.~\ref{sec:pir} for more details.

To build an \(\ISMR\) with stronger security guarantees, we will use the construction notion of the CC framework, as used in \cite{badertscher18_compos_robus_outsour_storag}. This notion is illustrated in fig.~\ref{usmr_constr}. One difference with the work of \cite{badertscher18_compos_robus_outsour_storag} is that, although there is a protocol plugged in interface \(\S\), this interface is only semi-honest and doesn't belong to the so-called "honest parties". We thus need to plug a simulator at this interface in the ideal world if we hope to achieve any meaningful construction. One way to see this is to consider the \(\texttt{read}\) capabilities at interfaces \(\C\) and \(\S\) when an encryption scheme, such as UE, is used. Since the client holds the decryption key, the interface \(\C\) is able to retrieve the plaintexts corresponding to the ciphertexts stored in memory. This is not true for the interface \(\S\) since it does not have access to the key under normal circumstances. Since there is no encryption in the ideal world, we need a simulator to simulate ciphertexts when \(\S\) sends \(\texttt{read}\) requests at its interface, as otherwise distinguishing between the two worlds would be trivial.

\begin{figure}[htbp]
\tikzset{every picture/.style={line width=0.75pt}} 
\begin{tikzpicture}[x=0.75pt,y=0.75pt,yscale=-0.9,xscale=0.9]
\draw    (158.67,50.33) -- (158.5,35) ;
\draw   (56.5,85.5) -- (86.33,85.5) -- (86.33,115.33) -- (56.5,115.33) -- cycle ;
\draw    (56.83,99) -- (44.83,99) ;
\draw    (86.33,99) -- (109,98.8) ;
\draw   (108.67,50.33) -- (208.33,50.33) -- (208.33,150) -- (108.67,150) -- cycle ;
\draw   (253,72.5) -- (222.17,72.5) -- (222.17,103.33) -- (253,103.33) -- cycle ;
\draw    (252.67,87) -- (261.67,87) ;
\draw    (222.17,87) -- (208.5,86.8) ;
\draw    (409.67,50.33) -- (409.5,35) ;
\draw    (337.33,101) -- (360,100.8) ;
\draw   (359.67,50.33) -- (459.33,50.33) -- (459.33,150) -- (359.67,150) -- cycle ;
\draw    (503.67,80) -- (518.67,80) ;
\draw    (473.17,81) -- (459.5,80.8) ;
\draw   (503.17,71.17) -- (503.06,131.16) -- (473.62,131.1) -- (473.74,71.11) -- cycle ;
\draw    (502.67,121.13) -- (518.67,121.13) ;
\draw    (459.67,121.13) -- (473.67,121.13) ;
\draw    (208.02,117.13) -- (261.02,117.13) ;
\draw    (261.02,117.13) -- (261.02,87.13) ;
\draw    (261.02,100.13) -- (270.02,100.13) ;
\draw    (518.67,80) -- (518.67,121.13) ;
\draw    (518.67,100.57) -- (528.65,100.43) ;
\draw (158.5,100.17) node   [align=left] {$\ISMR$};
\draw (159,27) node   [align=left] {$\W$};
\draw (71,100) node   [align=left] {$\mathsf{prot}$};
\draw (35.33,100) node   [align=left] {$\C$};
\draw (303,100) node   [align=left] {$\approx $};
\draw (237.5,86) node   [align=left] {$\mathsf{prot}$};
\draw (271.02,92.13) node [anchor=north west][inner sep=0.75pt]   [align=left] {$\S$};
\draw (409.5,100.17) node   [align=left] {Stronger\\ \:$\ISMR$};
\draw (410,27) node   [align=left] {$\W$};
\draw (327.33,102) node   [align=left] {$\C$};
\draw (518,66) node [anchor=north west][inner sep=0.75pt]   [align=left] {$\S.1$};
\draw (518.67,119.13) node [anchor=north west][inner sep=0.75pt]   [align=left] {$\S.2$};
\draw (475,92) node [anchor=north west][inner sep=0.75pt]   [align=left] {$\mathsf{sim}$};
\draw (260,72) node [anchor=north west][inner sep=0.75pt]   [align=left] {$\S.1$};
\draw (261,115) node [anchor=north west][inner sep=0.75pt]   [align=left] {$\S.2$};
\draw (530.02,92.13) node [anchor=north west][inner sep=0.75pt]   [align=left] {$\S$};
\end{tikzpicture}
\vspace{-0.3cm}
\caption{The construction notion for $\ISMR$. On the left, the plain $\ISMR$ equipped with a protocol. On the right, the stronger $\ISMR$ equipped with a simulator. The construction is deemed secure if there exists a simulator for which the two systems are indistinguishable.\label{usmr_constr}}
\vspace{-0.4cm}
\end{figure}
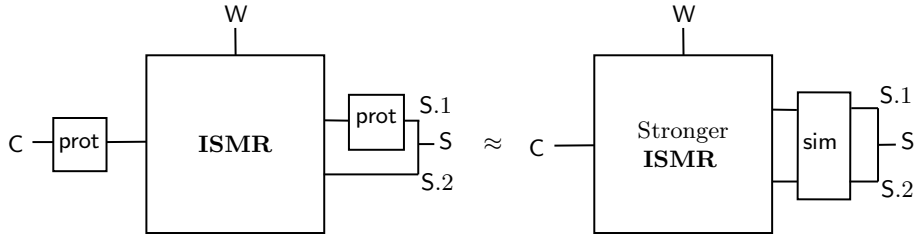

For the sake of clarity, the $\ISMR$ will be renamed $\USMR$ (for Updatable Server Memory-Resource) when modeling UE schemes, see sec.~\ref{sec:org577f7f9}. When modeling PIR schemes, the $\ISMR$ will be renamed $\DB$ (for Database), see sec.~\ref{sec:pir}. We will also rename $\textsc{NeedInteraction}$, $\texttt{askInteraction}$ and $\texttt{interact}$. 

\section{A composable treatment of Updatable Encryption}
\label{sec:org577f7f9}

\subsection{Instantiation of the $\ISMR$ to Updatable Encryption}
\label{sec:org216d202}

We recall that the $\ISMR$ is renamed $\USMR$ in this section. 
Moreover, at interface \(\S\), we also distinguish between honestly reading the memory  - through \(\texttt{read}\) requests at sub-interface \(\S.1\) - to update a ciphertext without trying to use this information against the client; and maliciously reading (we prefer to say \textit{leaking}) the memory - through \(\texttt{leak}\) requests at sub-interface \(\S.2\) - to gain information and try to break the confidentiality guarantees of the client. Sending a $(\texttt{leak}, i)$ request triggers the event $\EE^{\leaked}_{\Data, i}$.
A full description of the $\USMR$ is given in fig.~\ref{USMR}.

\begin{figure}[htbp]
  \begin{mybox}{Resource $\USMR_{\Sigma, n}$}
    \vspace{-0.4cm}
\begin{algorithm}[H]
\renewcommand{\thealgorithm}{}
\makeatletter
\renewcommand*{\ALG@name}{Initialization}
\makeatother
\caption{Initialization}
\begin{algorithmic}
	\State {$\textsc{NeedUpdate} \leftarrow \texttt{false}$}
    \State {$\M \leftarrow [~]$}
\end{algorithmic}
\end{algorithm}

\vspace{-0.8cm}

\begin{algorithm}[H]
\renewcommand{\thealgorithm}{}
\makeatletter
\renewcommand*{\ALG@name}{Interface}
\makeatother
\caption{Interface $\C$}
\begin{algorithmic}
\Require {$(\texttt{read}, i)\in [n]$}
  \If {\textbf{not }\textsc{NeedUpdate}}
    \State {\Return $\M[i]$}
  \EndIf
\Require {$(\texttt{write}, i, x)\in [n] \times \Sigma$}
  \If {$\textbf{not }\textsc{NeedUpdate}$}
    \State {$\M[i] \leftarrow x$}
  \EndIf
\Require {$\texttt{askUpdate}$}
  \If {$\textbf{not }\textsc{NeedUpdate}$}
    \State {$\textsc{NeedUpdate} \leftarrow \texttt{true}$}
  \EndIf
\Require {$\texttt{getStatus}$}
  \State {\Return $\textsc{needUpdate}$}
\end{algorithmic}
\end{algorithm}

\vspace{-0.8cm}

\begin{algorithm}[H]
\renewcommand{\thealgorithm}{}
\makeatletter
\renewcommand*{\ALG@name}{Sub-Interface}
\makeatother
\caption{Sub-Interface $\S.1$ of \textbf{Interface } $\S$}
\begin{algorithmic}
\Require {$(\texttt{read}, i) \in [n]$}
  \State {\Return $\M[i]$}
\Require {$(\texttt{write}, i, x) \in [n]\times \Sigma$}
  \State {$\M[i] \leftarrow x$}
\Require {$\texttt{update}$}
  \If {$\textsc{needUpdate}$}
    \State {$\textsc{needUpdate} \leftarrow \texttt{false}$}
  \EndIf
\end{algorithmic}
\end{algorithm}

\vspace{-0.8cm}

\begin{algorithm}[H]
\renewcommand{\thealgorithm}{}
\makeatletter
\renewcommand*{\ALG@name}{Sub-Interface}
\makeatother
\caption{Sub-Interface $\S.2$ of \textbf{Interface } $\S$}
\begin{algorithmic}
\Require {$(\texttt{leak}, i)\in [n]$}
  \State {$\EE \xleftarrow{+} \EE^{\leaked}_{\Data, i}$}
  \State {\Return $\M[i]$}
\Require {$\texttt{getStatus}$}
  \State {\Return $\textsc{needUpdate}$}
\end{algorithmic}
\end{algorithm}
\end{mybox}
\caption{The $\ISMR$ viewed as an updatable server-memory resource $\USMR$ with finite alphabet $\Sigma$ and memory size $n$. Interface $\S$ guarantees that it will endorse an honest behavior, through the application of a converter, at its sub-interface $\S.1$. However, no such guarantees are offered at its sub-interface $\S.2$.\label{USMR}}
\end{figure}

\subsection{The Updatable Key Resource \(\UpdKey\)}
\label{sec:orgee2477c}

We have so far described the \(\USMR\), a Server-Memory Resource to which we want to apply a UE scheme to strengthened the security guarantees of the client. To do so, we need to model the use of cryptographic keys and update tokens needed by UE schemes. This is why we introduce an Updatable Key Resource, named \(\UpdKey\), whose role is to model the existence, the operations and the availability of keys as well as update tokens. In the following, let \(\K\) be the key space of UE schemes. Given \(k\) and \(k'\) two keys in \(\K\), the notation \(\Delta \leftarrow \T(k, k')\) denotes the assignation, to the variable \(\Delta\), of the token that updates ciphertexts encrypted under the key \(k\) to ones encrypted under \(k'\).

\begin{figure}[htbp]
  \begin{mybox}{Resource $\UpdKey$}
        \vspace{-0.4cm}
\begin{algorithm}[H]
\renewcommand{\thealgorithm}{}
\makeatletter
\renewcommand*{\ALG@name}{Initialization}
\makeatother
\caption{Initialization}
\begin{algorithmic}
	\State {$e \leftarrow 1, \KK \leftarrow [~], \TT \leftarrow [\perp]$}
    \State {$k \twoheadleftarrow \K, \KK \leftarrow \KK \mid\mid k$}
    \State {$\EE \xleftarrow{+} \EE^{\epoch}_e$}
\end{algorithmic}
\end{algorithm}

\vspace{-0.8cm}

\begin{algorithm}[H]
\renewcommand{\thealgorithm}{}
\makeatletter
\renewcommand*{\ALG@name}{Interface}
\makeatother
\caption{Interface $\C$}
\begin{algorithmic}
\Require {$\texttt{fetchKey}$}
    \State {\Return $\KK[e]$}
\Require{$\texttt{nextEpoch}$}
    \State {$k_{e+1} \twoheadleftarrow \K$}
    \State {$\KK \leftarrow \KK \mid\mid k_{e+1}$}
    \State {$\Delta_{e+1} \leftarrow \T(k_e, k_{e+1})$}
    \State {$\TT \leftarrow \TT \mid\mid \Delta_{e+1}$}
    \State {$e \leftarrow e + 1$}
    \State {$\EE \xleftarrow{+} \EE^{\epoch}_e$}
    \State {\Return $k_{e+1}$}
\end{algorithmic}
\end{algorithm}

\vspace{-0.8cm}

\begin{algorithm}[H]
\renewcommand{\thealgorithm}{}
\makeatletter
\renewcommand*{\ALG@name}{Sub-Interface}
\makeatother
\caption{Sub-Interface $\S.1$ of \textbf{Interface } $\S$}
\begin{algorithmic}
\Require {$(\texttt{fetchToken}, i)$}
    \If {$2\leq i\leq e$}
        \State {\Return $\TT[i]$}
    \Else
        \State {\Return $\perp$}
    \EndIf
\end{algorithmic}
\end{algorithm}

\vspace{-0.8cm}

\begin{algorithm}[H]
\renewcommand{\thealgorithm}{}
\makeatletter
\renewcommand*{\ALG@name}{Sub-Interface}
\makeatother
\caption{Sub-Interface $\S.2$ of \textbf{Interface } $\S$}
\begin{algorithmic}
\Require {$(\texttt{leakKey}, i)$}
    \If {$i\leq e$ \textbf{ and } $\EE^\leaked_{\Key, i}$}
        \State {\Return $\KK[i]$}
    \Else
        \State {\Return $\perp$}
    \EndIf
\Require {$(\texttt{leakToken}, i)$}
    \If {$2\leq i\leq e$ \textbf{ and } $\EE^\leaked_{\Token, i}$}
        \State {\Return $\TT[i]$}
    \Else
        \State {\Return $\perp$}
    \EndIf
\end{algorithmic}
\end{algorithm}
\end{mybox}
\caption{The updatable key (with its associated token) resource $\UpdKey$. For interface $\S$, we use the same distinction between its sub-interfaces $\S.1$ and $\S.2$ as in the $\USMR$ description.\label{updkey}}
\end{figure}

In $\UpdKey$, the \(\texttt{fetchToken}\) request is always accessible at sub-interface \(\S.1\), since the protocol used at this interface will prevent the information it provides to be maliciously used, whereas the request \(\texttt{leakToken}\) at interface \(\S.2\) requires that a special event \(\EE^\leaked_{\Token,i}\) has been triggered before returning the update token to epoch \(i\), which can be used for malicious purposes.
These events are triggered by the environment which, in CC, can be given an interface that is usually denoted by \(\W\) for world interface.

This separation between \texttt{read}/\texttt{leak} in $\USMR$ and \texttt{fetchToken}/\texttt{leakToken} requests in $\UpdKey$ is important because it allows us to describe the security guarantees of the system more precisely. Indeed, since the server needs to retrieve the token to update the ciphertexts, if we consider that this token "leaked", it becomes impossible to express the post-compromise security guarantees brought by UE schemes. This is because if all tokens leak, a single key exposure compromises the confidentiality of ciphertexts for all subsequent epochs.

\subsection{An Updatable Encryption protocol for the \(\USMR\)}
\label{sec:org25759eb}

The main focus of this section being the study of the security guarantees brought by the use of UE schemes, we have to define an encryption protocol based on UE for the \(\USMR\) (and $\UpdKey$) and study its effects. Since we work in the CC framework, we describe our protocol as a pair of converters \((\uec, \ues)\) where \(\uec\) will be plugged in interface \(\C\) and \(\ues\) in sub-interface \(\S.1\) of the \(\USMR\) and $\UpdKey$. A formal description of \(\uec\) (resp. \(\ues\)) can be found in fig.~\ref{uec_converter} (resp. fig.~\ref{ues_converter}). 

\begin{figure}[htbp]
  \vspace{-0.4cm}
  \begin{mybox}{Converter $\uec$}
        \vspace{-0.4cm}
\begin{algorithm}[H]
\renewcommand{\thealgorithm}{}
\makeatletter
\renewcommand*{\ALG@name}{Initialization}
\makeatother
\caption{Initialization}
\begin{algorithmic}
	\State {$k \leftarrow \texttt{fetchKey}$ at interface $\C$ of $\UpdKey$}
\end{algorithmic}
\end{algorithm}

\vspace{-0.8cm}

\begin{algorithm}[H]
\renewcommand{\thealgorithm}{}
\makeatletter
\renewcommand*{\ALG@name}{Interface}
\makeatother
\caption{Interface $\texttt{out}$}
\begin{algorithmic}
\Require {$(\texttt{read}, i)\in [n]$}
  \State {$c \leftarrow (\texttt{read}, i)$ at interface $\C$ of $\USMR$}
  \If {$c\neq \perp$}
    \State {$m \leftarrow \Dec_k(c)$}
    \If {$m \neq \diamond$}
      \State {\Return $m$}
    \EndIf
  \EndIf
\Require {$(\texttt{write}, i, x)\in [n] \times \Sigma$}
  \State {$c \leftarrow \Enc_k(x)$}
  \State {Send $(\texttt{write}, i, c)$ at interface $\C$ of $\USMR$}
\Require {$\texttt{askUpdate}$}
  \State {$u \leftarrow \texttt{getStatus}$ at interface $\C$ of $\USMR$}
  \If {\textbf{not} $u$}
    \State {$k \leftarrow \texttt{nextEpoch}$ at interface $\C$ of $\UpdKey$}
    \State {Send $\texttt{askUpdate}$ at interface $\C$ of $\USMR$}
  \EndIf
  \State
\end{algorithmic}
\end{algorithm}
\end{mybox}
\caption{The client's converter $\uec$ for UE scheme $(\KG, \TG, \Enc, \Dec, \Upd)$ with decryption error symbol $\diamond$.\label{uec_converter}}
\vspace{-0.2cm}
\end{figure}

\begin{figure}[htbp]
  \begin{mybox}{Converter $\ues$}
        \vspace{-0.4cm}
\begin{algorithm}[H]
\renewcommand{\thealgorithm}{}
\makeatletter
\renewcommand*{\ALG@name}{Initialization}
\makeatother
\caption{Initialization}
\begin{algorithmic}
	\State {$\Delta \leftarrow \perp, \: e \leftarrow 1$}
\end{algorithmic}
\end{algorithm}

\vspace{-0.8cm}

\begin{algorithm}[H]
\renewcommand{\thealgorithm}{}
\makeatletter
\renewcommand*{\ALG@name}{Interface}
\makeatother
\caption{Interface $\texttt{out}$}
\begin{algorithmic}
\Require {$\texttt{update}$}
  \State {$u \leftarrow \texttt{getStatus}$ at interface $\S$ of $\USMR$}
  \If {$u$}
    \State {$e \leftarrow e + 1$}
    \State {$\Delta \leftarrow (\texttt{fetchToken}, e)$ at interface $\S$ of $\UpdKey$}
    \For {$i = 1\ldots n$}
      \State {$c \leftarrow (\texttt{read}, i)$ at interface $\S$ of $\USMR$}
      \State {Send $(\texttt{write}, i, \Upd_\Delta(c))$ at interface $\S$ of $\USMR$}
    \EndFor
  \State {Send $\texttt{update}$ at interface $\S$ of $\USMR$}
  \EndIf
\Require {$(\texttt{read}, i)$, $(\texttt{write}, i, x)$, or $(\texttt{fetchToken}, e)$}
  \State {\Return $\perp$}
\end{algorithmic}
\end{algorithm}
\end{mybox}
\caption{\label{ues_converter}The server's converter $\ues$ for UE scheme $(\KG, \TG, \Enc, \Dec, \Upd)$. Since the server is semi-honest, the converter monitors the behavior of the requests at sub-interface $\S.1$. The server guarantees that updates are done correctly (using the UE scheme) through the $\texttt{update}$ request, and that the requests $\texttt{read}, \texttt{write}$ and $\texttt{fetchToken}$ are only used to update the ciphertexts and not to gain information to break the confidentiality of the data. These requests are thus disabled at its interface $\texttt{out}$ but they can still be used internally by the converter.}
\vspace{-0.6cm}
\end{figure}
\vspace{-0.2cm}
\subsection{The confidential \(\USMR\)}
\label{sec:org4d2836d}

The security guarantees of the \(\USMR\) can be improved by requiring confidentiality for the client's data. The resulting resource is called confidential \(\USMR\) and we will refer to it as \(\cUSMR\). In practice, this means that, on a \((\texttt{leak}, i)\) request at interface \(\S\), only the length of $\M[i]$ is returned to the adversary and not the \(i\)-th entry itself. The \texttt{read} and \texttt{write} capabilities of sub-interface \(\S.1\) are removed. The resource \(\cUSMR\) is described in fig.~\ref{cUSMR}.

\begin{figure}[htbp]
 \vspace{-0.5cm}
  \begin{mybox}{Resource $\cUSMR$}
        \vspace{-0.4cm}
\begin{algorithm}[H]
\renewcommand{\thealgorithm}{}
\makeatletter
\renewcommand*{\ALG@name}{Sub-Interface}
\makeatother
\caption{Sub-Interface $\S.1$ of $\textbf{Interface } \S$}
\begin{algorithmic}
\Require {$\texttt{update}$}
  \If {\textsc{NeedUpdate}}
    \State {$\textsc{NeedUpdate} \rightarrow \texttt{false}$}
  \EndIf
\end{algorithmic}
\end{algorithm}

\vspace{-0.8cm}

\begin{algorithm}[H]
\renewcommand{\thealgorithm}{}
\makeatletter
\renewcommand*{\ALG@name}{Sub-Interface}
\makeatother
\caption{Sub-Interface $\S.2$ of $\textbf{Interface } \S$}
\begin{algorithmic}
\Require {$(\texttt{leak}, i)\in [n]$}
  \State {\Return $\vert \M[i]\vert$}
\Require {$\texttt{getStatus}$}
  \State {\Return \textsc{NeedUpdate}}
\end{algorithmic}
\end{algorithm}
\end{mybox}
\vspace{-0.2cm}
\caption{The confidential and  updatable server-memory resource $\cUSMR$. Only differences with $\USMR$ are shown.\label{cUSMR}}
\vspace{-0.4cm}
\end{figure}

At this point, we rectify the claim made by Boyd \etal \cite{boyd20_fast_secur_updat_encry}, namely that their $\mathsf{IND}\text{-}\mathsf{UE}$ security notion is better than the $\mathsf{IND}\text{-}\mathsf{ENC}\text{+}\mathsf{UPD}$ notions, in that it hides the age of ciphertexts. By age, we mean the last epoch in which the ciphertext was freshly written to the database. We show that this is only true when ciphertexts can leak at most one time per epoch. Indeed, if a ciphertext can leak at least two times per epoch, the adversary can use its first (resp. second) leak at the start (resp. end) of each epoch. If the ciphertext has changed between the two leaks, it must have been rewritten during this epoch and its age is now $0$. If it has not changed, then its age is incremented. We see that in this setting, the age of ciphertexts cannot be protected.

In the rest of this work, we will distinguish between resources that only allow one leak per ciphertext per epoch, denoted by a $1$ in an exponent (\eg $\USMR^1$), and resources that allow two or more leaks per ciphertext per epoch, denoted by a $+$ (\eg $\USMR^+$). If the number of leaks doesn't matter we will omit the exponent.

It is important to stress that the security of UE schemes is thought of in the context of adaptive adversaries, where the use of a UE scheme should bring post-compromise confidentiality guarantees to the client. In this work, we thus consider adaptive adversaries and use the extension of CC developed by Jost \etal in \cite{jost20_overc_impos_resul_compos_secur}.

\subsection{Handling post-compromise security}
\label{sec:org2fe4f1d}

The goal of this section is to give an exact description of the post-compromise security guarantees given by UE schemes. Said differently, we want to explain how the security guarantees evolve after a key exposure. When dealing with situations such as key exposures, composable frameworks usually stumble on an impossibility result called the \emph{commitment problem}. This problem is the following : given a message \(m\), how can an online simulator explain a simulated ciphertext \(c\), generated without knowledge of \(m\), with a key \(k\) such that \(c\) decrypts to \(m\) under this key. Thanks to a recent work of Jost \etal \cite{jost20_overc_impos_resul_compos_secur}, the CC framework is well equipped to deal with this impossibility result. This is done through the use of interval-wise security guarantees. In CC, the interval-wise relaxation describes security guarantees within an interval delimited by predicates on the global event history. For example, we can describe security guarantees before and after the key leaks.
\smallskip

A UE scheme is said to have \emph{uni-directional} updates if an update token can only move a ciphertext from the old key to the new key. A UE scheme supports \emph{bi-directional} updates if the update token can additionally downgrade ciphertexts from the new key to the old key. Jiang \cite{jiang20_direc_updat_encry_does_not_matter_much} recently proved that schemes supporting uni-directional updates do not bring more security compared to those with bi-directional updates, in the sense that the security notions for uni and bi-directional updates are proved to be equivalent. Thus, in what follows, we only focus on schemes with bi-directional updates.
\smallskip

In UE schemes, it is clear that the confidentiality of the user data is lost when an epoch key leaks. This security breach remains in subsequent epochs if the keys continue to leak or if successive update tokens leak. However, as soon as we encounter an epoch where neither the key nor the update token leaks, the confidentiality is restored. This remains true until a future epoch, where either a key leaks or consecutive update tokens leak until a key is finally exposed. This is due to the fact that ciphertexts can be upgraded and downgraded with update tokens to an epoch where a key is exposed.

In the epoch timeline, the areas where confidentiality is preserved are called \emph{insulated regions}. They have been studied and used in previous works \cite{lehmann18_updat_encry_post_compr_secur,klooss19_r_cca_secur_updat_encry_integ_protec,boyd20_fast_secur_updat_encry,jiang20_direc_updat_encry_does_not_matter_much}. We describe those regions with their extreme left and right epochs. These pairs of epochs are called \emph{firewalls}. We recall the definition used in \cite{boyd20_fast_secur_updat_encry}.

   \begin{definition}
An insulated region with firewalls $\fwl$ and $\fwr$ is a consecutive sequence of epochs $(\fwl, \ldots, \fwr)$ for which :
   \begin{enumerate}
       \item No key in the sequence of epochs $(\fwl, \ldots, \fwr)$ is corrupted.
       \item The tokens $\Delta_\fwl$ and $\Delta_{\fwr + 1}$ are not corrupted, if they exist.
       \item All tokens $(\Delta_{\fwl + 1}, \ldots, \Delta_\fwr)$ are corrupted.
   \end{enumerate}
The set of all firewall pairs is denoted by $\FW$. The set of all insulated regions is denoted by $\IR := \cup_{(\fwl, \fwr)\in \FW}{\lbrace \fwl, \ldots, \fwr\rbrace }$.
   \end{definition}

The epochs where the confidentiality guarantees do not hold are the ones not found in \(\IR\). The set of firewalls \(\FW\) can easily be described using predicates on the global event history. This is done in the following manner.
\begin{align*}
 \FW := \lbrace (\fwl, \fwr) \mid & \fwl \leq \fwr, \\
                                 & \forall e\in \lbrace \fwl, \ldots, \fwr\rbrace, \neg\EE^\leaked_{\Key, e}, \\
                                 & \neg\EE^\leaked_{\Token, \fwl} \text{ and } \neg\EE^\leaked_{\Token, \fwr + 1}, \\
                                 & \forall e\in \lbrace \fwl + 1, \ldots, \fwr\rbrace, \EE^\leaked_{\Token, e}\rbrace
\end{align*}

We say that the $i$-th entry of a database $\M$ is compromised if an adversary gets an encryption of $\M[i]$ in an epoch $e$ such that $e$ does not belong to an insulated region. Formally, we introduce the predicate
\begin{align*}
  P_{\texttt{compromised}, i}(\EE) :=\: & \exists e : \EE^\epoch_e \prec \EE^\leaked_{\Data,i} \prec \EE^\epoch_{e+1} \quad \wedge \\
  & \forall (\fwl, \fwr)\in \FW, \neg(\EE^\epoch_\fwl \prec \EE^\epoch_e \prec \EE^\epoch_{\fwr+1})
\end{align*}
  
The right side of the conjunction means `the epoch $e$ does not belong to an insulated region`. Recall that the event $\EE^\epoch_e$ is triggered by $\UpdKey$ on a $\texttt{nextEpoch}$ request and the event $\EE^\leaked_{\Data, i}$ is triggered by the $\USMR$ on a $(\texttt{leak}, i)$ request.

Then, we introduce the event $\EE^\insec_j$ which indicates that the $j$-th entry of the database is not confidential. This event can only be triggered by the environment. Now, we can modify our definition of the $\cUSMR$ in the following way : on a $(\texttt{leak}, j)$ request, this resource now returns $\M[j]$ if $\EE^\insec_j$ has been triggered and $\vert \M[j]\vert$ otherwise. For $1\leq i\leq n$, we introduce the predicate
\[ P_{\mathtt{only}, i}(\EE) := \bigwedge_{j\in \lbrace 1,\ldots,n\rbrace \setminus \lbrace i\rbrace}{\EE^\insec_j} \]
It formalizes that we do not consider the confidentiality of plaintexts other than the $i$-th one.

Following the notation of \cite{jost20_overc_impos_resul_compos_secur}, we introduce our main theorem. Let $n$ be the number of entries stored in the server. Our construction is an intersection of $n$ specifications. For $i\in \lbrace 1, \ldots, n\rbrace$, we assume that an adversary knows every entry of the database except the $i$-th one. Then, the $i$-th specification guarantees the confidentiality of the $i$-th entry until it trivially leaks because an adversary gained access to an encryption of this plaintext under an exposed epoch key.

\begin{theorem}
  \label{thUE}
     Let $\Sigma$ be a finite alphabet and $n\in \N$. There exists a sequence of simulators $(\sigma_i)_{1\leq i\leq n}$ such that the protocol $\pi_{\UE} := (\uec, \ues)$, described in fig.~\ref{uec_converter} and \ref{ues_converter}, based on an $\UECPA$ secure UE scheme constructs the $\cUSMR_{\Sigma, n}$ from the $\USMR_{\Sigma, n}$ and $\UpdKey$
      \[ [\USMR_n^1, \UpdKey] \xrightarrow{\pi_{\UE}} \bigcap_{1\leq i\leq n}{(\sigma_i \cUSMR_n^1)^{[P_{\mathtt{only}, i}(\EE),P_{\mathtt{compromised}, i}(\EE)]:\epsilon_{\CPA}}}  \]
      where $\epsilon_\CPA$ denotes our reduction, given in theorem \ref{thCPA}, from distinguishing between the $\cUSMR^1$ with our simulator and the $\USMR^1$ with our protocol, to winning the $\UECPA$ game.
   \end{theorem}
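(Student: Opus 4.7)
The overall strategy is the hybrid-of-hybrids approach outlined in the introduction: for each fixed $i\in\{1,\dots,n\}$ we exhibit a simulator $\sigma_i$ and bound the distinguishing advantage between the real system $\pi_{\UE}[\USMR_n^1,\UpdKey]$ and the ideal system $\sigma_i\cUSMR_n^1$, restricted to the interval $[P_{\mathtt{only},i},P_{\mathtt{compromised},i}]$, by a reduction $\epsilon_{\CPA}$ to the $\UECPA$ game. The intersection over $i$ then comes for free from the set-theoretic definition of specifications together with the compatibility of the interval-wise and reduction relaxations recalled in the background section. Fixing $i$ once and for all, the interval condition is what rules out the commitment problem: outside the interval either $P_{\mathtt{only},i}$ has not yet triggered (in which case some entry $j\neq i$ is still formally confidential and we make no claim) or $P_{\mathtt{compromised},i}$ has triggered (an encryption of $\M[i]$ under an exposed key has been leaked, so the simulator is allowed to give up).

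The simulator $\sigma_i$ plugged into $\S.2$ of $\cUSMR_n^1$ (and of $\UpdKey$) internally runs its own copies of $\KG,\TG,\Enc,\Upd$, maintains arrays $\KK,\TT$ as $\UpdKey$ does, and keeps a simulated memory $\widetilde{\M}$. On an honest $(\texttt{write},j,x)$ of the client as observed through $\cUSMR$, the simulator encrypts $x$ (for $j\neq i$, which it learns because $P_{\mathtt{only},i}$ has triggered $\EE^\insec_j$ for all $j\neq i$) or encrypts a fixed dummy plaintext of the correct length $|\M[i]|$ (for $j=i$, where only the length is leaked). On $\texttt{askUpdate}/\texttt{update}$ it internally advances the epoch counter, samples the next simulated key, computes the simulated token and updates every entry of $\widetilde{\M}$ exactly as $\ues$ would. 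Finally the leakage oracles $(\texttt{leak},j)$, $(\texttt{leakKey},e)$, $(\texttt{leakToken},e)$ of the ideal resources are rerouted to return the corresponding simulated ciphertext, key or token. By construction the simulator's transcript is syntactically identical to that of the real world; the only place it can be distinguished is the simulated ciphertexts for the $i$-th entry.

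The core of the proof is the reduction $\epsilon_{\CPA}$, which is stated separately as Theorem~\ref{thCPA}. Following the Coretti--Maurer--Tackmann recipe, I define a sequence of hybrid resources $\H_0,\H_1,\dots,\H_N$ where $\H_0=\pi_{\UE}[\USMR_n^1,\UpdKey]$ and $\H_N=\sigma_i\cUSMR_n^1$, each hybrid swapping one more honest encryption/update of $\M[i]$ for the dummy one. To bound the gap between consecutive hybrids I run a second, inner hybrid argument that enumerates the possible shapes of challenge the adversary sees at index $i$: a fresh encryption of $\M[i]$ in some epoch, or an update of a previously stored ciphertext carried across epochs. This is exactly where the asymmetric challenge technique of Section~1 is needed: between two consecutive hybrids the challenge pair is $(m,c)$ rather than $(m_0,m_1)$, and a naive reduction to a plain $\ENCCPA$ or $\UPDCPA$ oracle would have distinguishing advantage $1$. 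Instead, the reduction forwards the pair $(m,c)$ to the single $\UECPA$ challenge oracle, which by definition either returns $\Enc_{k_e}(m)$ or $\Upd_{\Delta_e}(c)$, and embeds the answer at the appropriate memory slot of the simulated $\USMR$; all other entries are encrypted with keys the reduction knows, and all leakage queries permitted inside the interval are answered from those known keys and tokens. The restriction $\USMR_n^1$ (one leak per entry per epoch) is precisely what guarantees that the reduction never has to answer a query that would force it to know both $\Enc(m)$ and $\Upd(c)$ for the same slot in the same epoch, so it never gets stuck.

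The main obstacle is bookkeeping at the boundary of the insulated regions defined by $\FW$. Inside an insulated region $(\fwl,\dots,\fwr)$ the reduction can only simulate without querying the $\UECPA$ challenger on keys and tokens that would make the game trivially winnable; this requires a careful case analysis showing that every query the distinguisher can legally make at $\S.2$ during the interval $[P_{\mathtt{only},i},P_{\mathtt{compromised},i}]$ falls into one of the legal query classes of the $\UECPA$ game and can therefore be answered using a combination of the known non-challenge keys/tokens and the corruption oracle. Once this correspondence between ``legal CC-leakage inside the interval'' and ``legal game queries of $\UECPA$'' is established, summing the hybrid gaps yields the bound $\epsilon_{\CPA}$; intersecting over $i$ and applying the interval-wise relaxation yields the statement of the theorem.
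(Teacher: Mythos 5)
Your overall architecture is right — the paper's proof of Theorem~\ref{thUE} is indeed a short delegation to Theorem~\ref{thCPA} (one construction per $i$, then an intersection over $i$ by closure properties of the interval-wise and $\epsilon$-relaxations), and you have correctly identified that the real content is the simulator plus a hybrid reduction to $\UECPA$. However, the simulator you describe is not the paper's $\sigma_{k,\CPA}$; it is essentially $\sigma_{k,\CPAp}$, the one the paper reserves for the \emph{unrestricted-leakage} case $\cUSMR^+$. Your $\sigma_i$ tracks client writes and internally applies $\Upd$ to a simulated memory $\widetilde{\M}$ at every epoch change, and on a leak returns the (possibly updated) stored simulated ciphertext. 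The paper's $\sigma_{k,\CPA}$ for $\cUSMR^1$ is deliberately stateless about the update structure: on $(\texttt{leak},k)$ it \emph{always} returns a fresh encryption (under the current epoch key) of a random plaintext of the leaked length, regardless of whether the real memory at slot $k$ currently holds a fresh or an updated ciphertext. This matters for two concrete reasons. First, $\cUSMR^1$ (unlike $\PrivDB$) exposes no history at $\S.2$, only $(\texttt{leak},i)$ and $\texttt{getStatus}$, so the simulator you describe cannot actually observe the write/update events it claims to mirror; the always-fresh simulator needs nothing it does not have. Second, and more importantly, it is exactly the always-fresh simulator that makes the asymmetric $(m,c)$ challenge appear: the second hop of Theorem~\ref{thCPA}'s proof compares an \emph{updated} ciphertext (real side) against a \emph{fresh} encryption of a random plaintext of the same length (ideal side), which is precisely what $\G^{\UECPA}_\UPD$ versus $\G^{\UECPA}_\ENC$ decides. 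With your update-tracking simulator, both sides of every hybrid would be of the same shape (fresh vs.\ fresh, or update vs.\ update), and the natural reduction is to $\ENCCPA+\UPDCPA$ — that is, you would have re-proved Theorem~\ref{thENCUPD} under a $\USMR^1$ relabelling rather than Theorem~\ref{thCPA}, and the claimed use of the asymmetric technique would not materialize. Finally, the paper's hybrid structure is not a single outer chain with inner hybrids on ``challenge shapes'': it is a two-hop decomposition $\R\rightsquigarrow\SS\rightsquigarrow\I$ through an explicit intermediate $\SS$ (fresh encryptions of dummies, real updates), with one hybrid family of length $q$ bounded by $\ENCCPA$ for the first hop and a separate hybrid family of length $r$ bounded by $\UECPA$ for the second; reproducing that decomposition requires committing to the always-fresh simulator, so you should correct the simulator before the rest of the plan can be carried through as stated.
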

   
In the above theorem, we can replace $(\USMR^1$, $\cUSMR^1$, $\epsilon_\CPA$, $\UECPA)$ with $(\USMR^+$, $\cUSMR^+$, $\epsilon_\CPAp$, $\ENCCPA + \UPDCPA)$ when we deal with unrestricted leakage to include all our results.
   
\begin{proof}
  Since we consider an intersection of $n$ specification, we need to prove $n$ constructions. For $i\in \lbrace 1, \ldots, n\rbrace$, we need to prove that their exists a simulator $\sigma_i$ such that, in the interval $[P_{\mathtt{only}, i}(\EE),P_{\mathtt{compromised}, i}(\EE)]$, the protocol $\pi_{\UE}$ constructs the $\cUSMR^1_{\Sigma, n}$ from the $\USMR^1_{\Sigma, n}$ and $\UpdKey$ with respect to $\sigma_i$. This construction is formalized and proven in theorem \ref{thCPA} of sec.~\ref{sec:org63a24c5} where we give a detailed reduction from breaking our construction to winning the  $\UECPA$ game.

  When replacing $(\USMR^1$, $\cUSMR^1$, $\epsilon_\CPA$, $\UECPA)$ with $(\USMR^+$, $\cUSMR^+$, $\epsilon_\CPAp$, $\ENCCPA + \UPDCPA)$ in the above theorem, we need two prove $n$ more constructions. We formalize and prove these construction in theorem \ref{thENCUPD} of sec.~\ref{sec:orgbbe97e7} where we give a detailed reduction from breaking our construction to winning the  $\ENCCPA + \UPDCPA$ games.
\end{proof}

\subsection{Exact security of UE schemes}
\label{sec:orge661c8e}

\subsubsection{At most one leak per entry per epoch : the \(\CPA\) case}
\label{sec:org63a24c5}

In this section, we work with a $\USMR^1$ of size $n$ where the attacker can leak entries of the database at its interface \(\S\). This capability allows the distinguisher (which is connected to every interface of the system) to build an encryption oracle. Indeed, the distinguisher can use the client interface \(\C\) to write messages of its choice into the database, and then leak the ciphertexts associated to these messages by sending a \texttt{leak} request at interface \(\S\). This fact  motivates the use of a \(\CPA\) security notion since it's tailored to bring security in the presence of such an encryption oracle.

\paragraph{The simulator $\sigma_{k, \CPA}$ \label{simCPA}}
Since we are trying to prove our theorem \ref{thUE}, we place ourselves in the context of this theorem : we take $k\in \lbrace 1, \ldots, n\rbrace$ and we place ourselves in the interval $[P_{\mathtt{only}, k(\EE)}, P_{\mathtt{compromised}, k(\EE)}]$ where we do not consider the confidentiality of plaintexts other than the $k$-th one. In this case, the simulator \(\sigma_{k, \CPA}\) works as follows. It simulates the epoch keys and tokens and, on a \((\texttt{leakKey}, i)\) or \((\texttt{leakToken}, i)\) request, it checks if the event \(\EE^\leaked_{\Key, i}\) (respectively \(\EE^\leaked_{\Token, i}\)) exists in the Global Event History, and leaks the corresponding epoch key (respectively token) to the adversary if it is the case, and \(\perp\) otherwise. On a \((\texttt{leak}, k)\) request, the simulator forwards it to the ideal resource to get a length $\ell$ and returns a fresh encryption of a random plaintext of length $\ell$ under the current epoch key. Finally, on a \((\texttt{leak}, i)\) request, $i\neq k$, the simulator forwards it to the ideal resource to get a plaintext $x$ and returns a fresh encryption of $x$ under the current epoch key.

\subsubsection{\(\UECPA\) security is sufficient for a secure construction of \(\cUSMR\) that hides the age}
\label{sec:org1d4dd77}

The fact that $\UECPA$ is sufficient to construct $\cUSMR^1$ from $\USMR^1$ and $\UpdKey$ is expressed in th.~\ref{thUE} through an intersection of specifications. The following theorem shows how we construct each of those specifications.

\begin{theorem}
  \label{thCPA}
  Let $\Sigma$ be a finite alphabet, $n\in \N$ and $k\in \lbrace 1, \ldots, n\rbrace$. The protocol $\mathsf{ue} := (\uec, \ues)$ described in figures \ref{uec_converter} and \ref{ues_converter} based on a UE scheme constructs the $\cUSMR^1_{\Sigma, n}$ from the basic $\USMR^1_{\Sigma, n}$ and $\UpdKey$ inside the interval $[P_{\mathtt{only}, k}(\EE), P_{\mathtt{compromised}, k}(\EE)]$, with respect to the simulator $\sigma_{k, \CPA}$ described in \ref{simCPA} and the dummy converter $\honSrv$ (that disables any adversarial behavior). More specifically, we construct reductions such that, for all distinguishers $\D$ in a set of distinguishers $\DD$,
  \[ \Delta^\D(\honSrv^\S\uec^\C \ues^\S [\USMR^1_{\Sigma, n}, \UpdKey], \honSrv^\S\cUSMR^1_{\Sigma, n}) = 0  \]
\begin{align*}
  \Delta^\D(\uec^\C \ues^\S  [\USMR^1_{\Sigma, n}, & \UpdKey], \sigma_{k, \CPA}^{\S}\cUSMR^1_{\Sigma, n}) \leq  \\
  & (2q+r)\cdot\sup_{\D'\in \DD}\Delta^{\D'}(\G_\UPD^\UECPA, \G_\ENC^\UECPA)
  \end{align*}
where $q$ (resp. $r$) is an upper bound on the number of writes (resp. updates) made by the distinguisher to the memory location $k$.
\end{theorem}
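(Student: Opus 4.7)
The first equality is immediate. When the dummy converter $\honSrv$ is plugged at interface $\S$, every adversarial capability (in particular all $\texttt{leak}$, $\texttt{leakKey}$, $\texttt{leakToken}$ and $\texttt{getStatus}$ requests at $\S.2$) is blocked in both worlds, and the $\S.1$ behaviour of $\ues$ is purely internal. Consequently the distinguisher only interacts at interface $\C$, where by correctness of the UE scheme each $(\texttt{write}, i, x)$ followed by $(\texttt{read}, i)$ returns $x$ in both worlds, and $\texttt{askUpdate}$ has no externally visible effect. Hence the two systems are perfectly indistinguishable.

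For the second inequality, the plan is to run a hybrid argument restricted to the interval $[P_{\mathtt{only},k}, P_{\mathtt{compromised},k}]$, over exactly the events that touch memory location $k$. In this interval only the $k$-th entry is in question, so all other positions may be answered identically in both worlds. Let $q$ bound the number of $(\texttt{write}, k, \cdot)$ requests and $r$ bound the number of $\texttt{askUpdate}$ requests issued by $\D$ while the state of location $k$ is still alive. I would enumerate these events chronologically and define a chain of $2q + r + 1$ hybrid systems $H_0, H_1, \ldots, H_{2q+r}$ with $H_0 = \uec^\C \ues^\S [\USMR^1_{\Sigma,n}, \UpdKey]$ and $H_{2q+r} = \sigma_{k,\CPA}^\S \cUSMR^1_{\Sigma,n}$. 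Each step $H_{j-1} \to H_j$ replaces, at one specific event touching location $k$, either a real encryption of $m$ by a fresh encryption of a uniformly random plaintext of length $|m|$, or a real update of the ciphertext at location $k$ by the simulator's re-encryption. The simulator $\sigma_{k,\CPA}$ consistently serves any $\texttt{leakKey}$ / $\texttt{leakToken}$ issued in this interval because by definition of $P_{\mathtt{compromised},k}$ the epoch carrying the $k$-th ciphertext lies inside an insulated region, so the values the reduction would need to provide for a trivial win are never requested.

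The main obstacle, and the reason for the factor $2q + r$ rather than $q + r$, is precisely the asymmetric shape of the $\UECPA$ challenge signalled in the paper: the challenger takes a pair $(m, c)$ and returns either $\Enc_{k_e}(m)$ or $\Upd_{\Delta_e}(c)$, never $(m_0, m_1)$. To handle a single \emph{write} event at location $k$ I therefore need two consecutive hybrid switches: first replace $\Enc_{k_e}(m)$ by $\Upd_{\Delta_e}(c^\star)$ where $c^\star$ is an encryption (under the previous key) of a random plaintext of the same length, and then replace that updated ciphertext by a fresh $\Enc_{k_e}(r)$ for random $r$; each of these two switches reduces directly to $\Delta^{\D'}(\G_\UPD^\UECPA, \G_\ENC^\UECPA)$ by embedding the whole system around the challenger oracle and using its encryption/update oracles to answer all other writes and updates. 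An \emph{update} event at location $k$ costs a single hybrid step of the same kind. Summing the $q$ write contributions (each of cost $2$) and the $r$ update contributions (each of cost $1$) gives the claimed bound $(2q+r) \cdot \sup_{\D' \in \DD} \Delta^{\D'}(\G_\UPD^\UECPA, \G_\ENC^\UECPA)$.

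The hardest point in writing the proof carefully is not the counting but showing that the reduction is \emph{consistent}: when the IND-UE-CPA challenger is used to simulate one specific event, every other event at location $k$ (past and future) and every $(\texttt{leak}, i)$ at $i \neq k$, every $\texttt{leakKey}$, every $\texttt{leakToken}$, and every epoch transition must be answerable from the oracles and the simulator's local state without creating a trivial win condition in the UE game. This is where the interval relaxation $[P_{\mathtt{only},k}, P_{\mathtt{compromised},k}]$ plus the $\USMR^1$ restriction (at most one leak per entry per epoch, which is what rules out age detection) does the real work, and I would structure the proof so that before each reduction step I first verify that the trivial-win predicates of $\UECPA$ are never triggered inside the interval, and only then assemble the hybrid bound.
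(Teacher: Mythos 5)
Your argument is correct and arrives at the same $(2q+r)$ bound, but your hybrid decomposition is organized differently from the paper's. The paper introduces a single intermediate system $\SS$ that behaves like the real world whenever $\M[k]$ holds an updated ciphertext and like the ideal world whenever $\M[k]$ holds a fresh one; the jump $\R \to \SS$ is a $q$-step hybrid over writes, each step reducing to $\Delta(\G_0^\ENCCPA,\G_1^\ENCCPA)$, and the jump $\SS \to \I$ is an $r$-step hybrid over updates, each step reducing to $\Delta(\G_\UPD^\UECPA,\G_\ENC^\UECPA)$. The factor $2$ on the write count is then imported as a black box from Boyd \etal's proposition $\Delta(\G_0^\ENCCPA,\G_1^\ENCCPA) \leq 2\cdot\Delta(\G_\ENC^\UECPA,\G_\UPD^\UECPA)$. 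You instead build a single chronological chain of $2q+r$ hybrids and inline the proof of that proposition by splitting each write into two $\UECPA$ switches $\Enc_{k_e}(m) \to \Upd_{\Delta_e}(c^\star) \to \Enc_{k_e}(r)$ where $c^\star$ encrypts a fresh random plaintext of the correct length under the previous epoch key. This is mathematically equivalent, slightly more self-contained (you do not need to invoke Boyd \etal's reduction as a lemma), and keeps every hybrid step in the same game; the price is that some of your intermediate hybrids — those where a written slot is temporarily served as an update of a random ciphertext — are neither real-world nor ideal-world systems, whereas the paper's intermediate hybrids interpolate only between well-defined behaviours and never introduce states of that third type. Your discussion of availability, of what the interval relaxation $[P_{\mathtt{only},k},P_{\mathtt{compromised},k}]$ is for, and of why $\USMR^1$ matters (age detection under unrestricted leakage) matches the paper's remark following the theorem. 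One point you should make explicit when writing this out: to pose the challenge $(m, c^\star)$, the $\UECPA$ game requires $c^\star \in \mathcal{L}$ at epoch $e-1$, so the reduction must have obtained $c^\star$ from $\mathcal{O}.\Enc$ before the epoch transition; this is a benign bookkeeping constraint but it is what actually anchors your ``update of a random ciphertext'' intermediate to the game syntax.
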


The first condition, called \textit{availability}, checks if the two systems behave in the same way when no adversary is present. It rules out trivial protocols that would ensure confidentiality by not writing data in memory for example. In all this work, \textit{availability} follows from the correctness of the schemes used. For clarity and conciseness, we will omit it in the proofs.

\begin{proof}
  Let \(\R := \enc^\C \upd^\S [\USMR^1, \UpdKey]\) be the the real system and \(\I := \sigma_{k, \CPA}^{\S} \cUSMR^1\) be the ideal system. The two systems behave in the same way except when leaking the content of $\M[k]$ : $\R$ leaks an encryption of $\M[k]$ while $\I$ leaks an encryption of a random plaintext of length $\vert \M[k]\vert$. In order to determine the advantage of a distinguisher in distinguishing \(\R\) from \(\I\), denoted by \(\Delta^\D(\R, \I)\), we proceed with a sequence of systems. We introduce a hybrid system \(\SS\), then we determine the distinguishing advantages \(\Delta^\D(\R, \SS)\) and \(\Delta^\D(\SS, \I)\), the triangular inequality allowing us to bound \(\Delta^\D(\R, \I)\) by the sum of those two advantages.
\medskip

\noindent
\textbullet \(\:\)  Let \(\SS\) be a resource that behaves just like \(\R\) except on query \((\texttt{leak}, k)\) where it leaks an encryption of a random plaintext of length $\vert \M[k]\vert$ instead of an encryption of \(\M[k]\), if \(\M[k]\) contains a fresh encryption and not an updated one. This happens if a query \((\texttt{write}, k, x)\) has been issued by the client in the current epoch. In the case when \(\M[k]\) contains an updated version of a ciphertext, the two resources behave in the exact same way.

Let \(q\) be an upper bound on the number of $(\texttt{write}, k, .)$ queries issued to the systems. We define a hybrid resource \(\H_i\) that behaves just like \(\R\) on the first \(i\) ($\texttt{write}, k, .$) queries and like \(\SS\) afterwards. Then we define a reduction \(\CC_i\) that behaves like \(\H_i\) except it uses the game \(\G^{\ENCCPA}_b\) oracles instead of doing the UE operations by itself and on the \(i\)-th $(\texttt{write}, k, .)$ request (of the form \((\texttt{write}, k, x)\)) it challenges the game with input \((x, \bar{x})\), with $\bar{x}$ random of length $\vert x\vert$, to receive the ciphertext. We have
\[ \R \equiv \H_q \text{ and } \SS \equiv \H_0 \]
and
\[ \H_i \equiv \G_0^\ENCCPA \CC_i \equiv \G_1^\ENCCPA \CC_{i+1} \]

Indeed, this can be seen on the following timeline (\ref{tab:orgc966621}).

\begin{table}[htbp]
\centering
\begin{tabular}{|c|c|c|c|c|}
\hline
\(j\)-th $(\texttt{write}, k, .)$ query & \(j<i\) & \(j=i\) & \(j=i+1\) & \(j>i+1\)\\
\hline
\({\color{green}\G_0^\ENCCPA} \CC_i\) & \(\Enc(x)\) & \color{green}{$\Enc(x)$} & \(\Enc(\bar{x})\) & \(\Enc(\bar{x})\)\\
\hline
\({\color{green}\G_1^\ENCCPA} \CC_{i+1}\) & \(\Enc(x)\) & \(\Enc(x)\) & \color{green}{$\Enc(\bar{x})$} & \(\Enc(\bar{x})\)\\
\hline
\end{tabular}
\caption{\label{tab:orgc966621}Leakage behavior of both systems for each \((\texttt{write}, k, .)\) request.}
\end{table}

Let \(\CC_I\) be a reduction that samples \(i\in [1, q]\) at random and behaves like \(\CC_i\) and define \(\D' := \D\CC_I\). We have,
\[ \Pr[\D'(\G_0^\ENCCPA) = 1] = \frac{1}{q} \sum_{i = 1}^q{\Pr[\D(\CC_i \G_0^\ENCCPA) = 1]} \]
\vspace{-0.1cm}
and
\vspace{-0.1cm}
\begin{align*}
 \Pr[\D'(\G_1^\ENCCPA) = 1] & = \frac{1}{q} \sum_{i=1}^q{\Pr[\D(\CC_i \G_1^\ENCCPA) = 1]} \\
 & = \frac{1}{q} \sum_{i=0}^{q-1}{\Pr[\D(\CC_i \G_0^\ENCCPA) = 1]}
\end{align*}

Finally, the advantage of the distinguisher in distinguishing system \(\R\) from \(\SS\) is
\begin{align*}
  \Delta^\D (\R, \SS) & = \Delta^\D(\H_q, \H_0) \\
                      & = \Delta^\D(\CC_q \G_0^\ENCCPA, \CC_0 \G_0^\ENCCPA) \\
                      & = \vert \Pr[\D(\CC_q \G_0^\ENCCPA) = 1] - \Pr[\D(\CC_0 \G_0^\ENCCPA) = 1] \vert \\
                      & = \vert \sum_{i=1}^q{\Pr[\D(\CC_i \G_0^\ENCCPA) = 1]} - \sum_{i=0}^{q-1}{\Pr[\D(\CC_i \G_0^\ENCCPA) = 1]} \vert \\
                      & = q \cdot \vert \Pr[\D'(\G_0^\ENCCPA) = 1] - \Pr[\D'(\G_1^\ENCCPA) = 1] \vert \\
                      & = q\cdot \Delta^{\D'}(\G_0^\ENCCPA, \G_1^\ENCCPA)
\end{align*}

\smallskip

\noindent
\textbullet \(\:\) Let us consider the systems \(\SS\) and \(\I\). By definition, \(\SS\) behaves just like \(\I\) except on query \((\texttt{leak}, k)\) where it leaks an updated ciphertext of an encryption of \(\M[k]\) (instead of a fresh encryption of a random \(\bar{x}\) of length $\vert \M[k]\vert$ in the ideal system) if \(\M[k]\) contains an updated encryption and not a fresh one. In the case when \(\M[k]\) contains a fresh ciphertext, the two resources behave in the exact same way. Namely, they leak an encryption of a random \(\bar{x}\) of length $\vert \M[k]\vert$.

Let \(r\) be an upper bound on the number of update queries issued to the systems. We define a hybrid resource \(\H'_i\) that behaves just like \(\SS\) on the first \(i\) update queries and like \(\I\) afterwards. Then we define a reduction \(\CC'_i\) that behaves like \(\H'_i\) except it uses the game \(\G^{\UECPA}_{\mathsf{xxx}}\), where \(\mathsf{xxx} \in \lbrace\ENC, \UPD\rbrace\), oracles instead of doing the UE operations by itself and on the \(i\)-th update computation for the encryption $c$ of $\M[k]$, it challenges the game with input \((\bar{x}, c)\) to receive either a fresh encryption of the random plaintext \(\bar{x}\) (of length $\M[k]$) or the updated version of the ciphertext \(c\). We have
\[ \SS \equiv \H'_r \text{ and } \I \equiv \H'_0 \]

and
\[ \H'_i \equiv \G_\UPD^\UECPA \CC'_i \equiv \G_\ENC^\UECPA \CC'_{i+1} \]

Indeed, this can be seen on the following timeline (\ref{tab:orgf802954}) 

\begin{table}[htbp]
\centering
\begin{tabular}{|c|c|c|c|c|}
\hline
\(j\)-th update query for $\M[k]$ & \(j<i\) & \(j=i\) & \(j=i+1\) & \(j>i+1\)\\
\hline
\({\color{green}\G_\UPD^\UECPA} \CC'_i\) & \(\Upd(c)\) & \color{green}{$\Upd(c)$} & \(\Enc(\bar{x})\) & \(\Enc(\bar{x})\)\\
\hline
\({\color{green}\G_\ENC^\UECPA} \CC'_{i+1}\) & \(\Upd(c)\) & \(\Upd(c)\) & \color{green}{$\Enc(\bar{x})$} & \(\Enc(\bar{x})\)\\
\hline
\end{tabular}
\caption{\label{tab:orgf802954}Leakage behavior of both systems for each update request ($\bar{x}$ is always a random plaintext of length $\M[k]$).}
\end{table}

Let \(\CC'_I\) be a reduction that samples \(i\in [1, r]\) at random and behaves like \(\CC'_i\) and define \(\D'' := \D\CC'_I\). We have,
\[ \Pr[\D''(\G_\UPD^\UECPA) = 1] = \frac{1}{r} \sum_{i = 1}^r{\Pr[\D(\CC'_i \G_\UPD^\UECPA) = 1]} \]
and

\begin{align*}
\Pr[\D''(\G_\ENC^\UECPA) = 1] & = \frac{1}{r} \sum_{i=0}^{r-1}{\Pr[\D(\CC'_i \G_\UPD^\UECPA) = 1]}
\end{align*}

Finally, the advantage of the distinguisher in distinguishing system \(\SS\) from \(\I\) is

\begin{align*}
  \Delta^\D (\SS, \I) & = \Delta^\D(\H'_r, \H'_0) \\
                      & = \Delta^\D(\CC'_r \G_\UPD^\UECPA, \CC'_0 \G_\UPD^\UECPA) \\
                      & = \vert \Pr[\D(\CC'_r \G_\UPD^\UECPA) = 1] - \Pr[\D(\CC'_0 \G_\UPD^\UECPA) = 1] \vert \\
                      & = \vert \sum_{i=1}^r{\Pr[\D(\CC'_i \G_\UPD^\UECPA) = 1]} - \sum_{i=0}^{r-1}{\Pr[\D(\CC'_i \G_\UPD^\UECPA) = 1]} \vert \\
                      & = r \cdot \vert \Pr[\D''(\G_\UPD^\UECPA) = 1] - \Pr[\D''(\G_\ENC^\UECPA) = 1] \vert \\
                      & = r\cdot \Delta^{\D''}(\G_\UPD^\UECPA, \G_\ENC^\UECPA)
\end{align*}
\smallskip

\noindent
\textbullet \(\:\) We use the triangular inequality to conclude. Let \(q\) be our upper bound on the number of writes and \(r\) be our upper bound on the number of updates. The advantage of the distinguisher in distinguishing the real system \(\R\) from the ideal one \(\I\) is
\begin{align*}
\Delta^\D(\R, \I) & \leq \Delta^\D(\R, \SS) + \Delta^\D(\SS, \I) \\
                  & = q\cdot \Delta^{\D'}(\G_0^\ENCCPA, \G_1^\ENCCPA) +  r\cdot \Delta^{\D''}(\G_\UPD^\UECPA, \G_\ENC^\UECPA) \\
                  & = 2q\cdot \Delta^{\D'\CC''}(\G_\UPD^\UECPA, \G_\ENC^\UECPA) + r\cdot \Delta^{\D''}(\G_\UPD^\UECPA, \G_\ENC^\UECPA) \\
                  & \leq (2q + r) \cdot \Delta^{\mathcal{D}}(\G_\UPD^\UECPA, \G_\ENC^\UECPA)\footnotemark
\end{align*}

\footnotetext{Since we can also split the games the other way and consider $\UPDCPA$ security first, we can replace $2q+r$ with $\min\lbrace 2q+r, q+2r\rbrace$.}

Where the reduction \(\CC''\) is given by Boyd \etal in \cite{boyd20_fast_secur_updat_encry} to prove the following 

 \begin{proposition}
Let $\Pi$ be a UE scheme. For any $\ENCCPA$ adversary $\A$ against $\Pi$, there exists a reduction $\CC''$ such that
  \[ \Delta^{\A}(\G^\ENCCPA_0, \G^\ENCCPA_1) \leq 2\cdot\Delta^{\A\CC''}(\G^\UECPA_\ENC, \G^\UECPA_\UPD) \]
 \end{proposition}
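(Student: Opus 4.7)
The plan is to collapse a two-step hybrid argument into a single $\UECPA$ reduction through a coin flip; that flip is the origin of the factor $2$. The hybrid interpolating $\G_0^\ENCCPA$ and $\G_1^\ENCCPA$ will be the system that answers the challenge with an \emph{update} of a fixed ciphertext $c^*$ (from a previous epoch, encrypting a message of length $|m_0|=|m_1|$); switching from $\G_0^\ENCCPA$ to this hybrid corresponds to a $\UECPA$ instance with challenge $(m_0,c^*)$, and switching from the hybrid to $\G_1^\ENCCPA$ corresponds to a $\UECPA$ instance with challenge $(m_1,c^*)$. Folding these two steps into one reduction pays the multiplicative factor of $2$.

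I would define $\CC''$ to run $\A$ and simulate the $\ENCCPA$ game as follows. All non-challenge queries of $\A$ (epoch advance, encryption, corruption) are forwarded verbatim to the corresponding $\UECPA$ oracles. Before the challenge epoch, $\CC''$ asks the $\UECPA$ encryption oracle for a ciphertext $c^*$ encrypting an arbitrary plaintext of the length $\A$ will use, then lets the game update $c^*$ through ordinary epoch transitions. When $\A$ submits $(m_0,m_1)$, $\CC''$ draws a uniform bit $b^* \in \{0,1\}$, forwards the $\UECPA$ challenge $(m_{b^*}, c^*)$, and relays the returned ciphertext to $\A$. When $\A$ outputs its guess $b$, $\CC''$ outputs $b \oplus b^*$.

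The analysis then reduces to a short calculation. Writing $p_i := \Pr[\A = 1 \mid \G_i^\ENCCPA]$ for $i \in \{0,1\}$, in $\ENC$ mode the view fed to $\A$ is $\Enc(m_{b^*})$, i.e.\ exactly the view of $\G_{b^*}^\ENCCPA$; averaging over $b^*$ and taking the XOR into account gives
\[ \Pr[\A\CC'' = 1 \mid \G_\ENC^\UECPA] = \tfrac{1}{2}p_0 + \tfrac{1}{2}(1 - p_1) = \tfrac{1 + p_0 - p_1}{2}. \]
In $\UPD$ mode the view fed to $\A$ is $\Upd(c^*)$, independent of $b^*$, so the random bit $b^*$ makes $\CC''$'s output uniform: $\Pr[\A\CC'' = 1 \mid \G_\UPD^\UECPA] = \tfrac{1}{2}$. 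Subtracting yields $\Delta^{\A\CC''}(\G_\ENC^\UECPA, \G_\UPD^\UECPA) = \tfrac{1}{2}|p_0 - p_1| = \tfrac{1}{2}\Delta^\A(\G_0^\ENCCPA, \G_1^\ENCCPA)$, which is the required bound (using the symmetry of $\Delta$ in its two arguments).

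I expect the main obstacle to lie not in the conceptual reduction but in the bookkeeping around the generation of $c^*$: $\CC''$ must manufacture $c^*$ in a previous epoch and at the length of $\A$'s still-unknown challenge, without issuing any auxiliary $\UECPA$ query that would fall into a trivial-win/disqualification pattern dictated by the game's insulated-region rules. This can be handled by a standard deferred-simulation or guessing argument, after which the proof is exactly the three-line calculation above.
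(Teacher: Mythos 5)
The paper does not actually prove this proposition: the sentence immediately preceding it explicitly attributes the reduction $\CC''$ to Boyd \etal \cite{boyd20_fast_secur_updat_encry}, so there is no in-paper proof to compare against, only the cited source. Your reconstruction matches the argument one expects there. You introduce the hybrid that answers the $\ENCCPA$ challenge with an update of a fixed earlier-epoch ciphertext $c^*$, observe that crossing from $\G_0^\ENCCPA$ to that hybrid is a $\UECPA$ instance on challenge $(m_0,c^*)$ while crossing from the hybrid to $\G_1^\ENCCPA$ is one on $(m_1,c^*)$, and fold the two steps into a single reduction by sampling $b^*$ uniformly, challenging with $(m_{b^*},c^*)$, and XOR-ing $\A$'s guess against $b^*$. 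The probability calculation is correct: in $\UPD$ mode the view handed to $\A$ is $b^*$-independent so the XOR forces the output probability to $\tfrac12$, while in $\ENC$ mode the view is exactly that of $\G_{b^*}^\ENCCPA$, giving $\tfrac{1+p_0-p_1}{2}$; the difference is $\tfrac12\,\Delta^\A(\G_0^\ENCCPA,\G_1^\ENCCPA)$, which rearranges to the claimed factor-$2$ bound.

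The bookkeeping you defer is genuinely the load-bearing part, and it is good that you flag it rather than ignore it. Two things must be checked: $\CC''$ has to plant $c^*$ one epoch before the challenge with the length $\A$ will later choose, and $\CC''$ must never trigger a $\UECPA$ trivial-win on a corruption pattern that was legal for $\A$ under $\ENCCPA$. The second point is not cosmetic: a trivial-win \emph{deflates} $\Delta^{\A\CC''}$, which would break the $\leq$ direction of the bound. The resolution is that the $\UECPA$ firewall conditions around the challenge are anchored at the challenge epoch $e$ and the key/token leakage pattern, exactly the data that $\ENCCPA$ also constrains, and $c^*$'s lineage sits in the same insulated region as the challenge, so no new disqualification can arise; but a complete proof has to spell that correspondence out rather than invoke a generic ``standard argument.'' With that filled in, your sketch is a faithful reconstruction of the cited result.
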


 We also use the notation \(\Delta^{\mathcal{D}}(\mathbf{X}, \mathbf{Y}) = \sup_{\D\in \mathcal{D}}\Delta^\D(\mathbf{X}, \mathbf{Y})\). \(\square\)
 \end{proof}

 \begin{remark}
   We point out that our interval choice, our simulator and our reduction circumvent the commitment problem. Indeed, when a key exposure makes us leave an insulated region, the adversary can :
   \begin{enumerate}
   \item decrypt the content of ciphertexts stored at location $i\neq k$. These ciphertexts are perfectly simulated since we do not consider the confidentiality of their plaintexts in our interval. Thus, there is no commitment problem.
   \item decrypt the content of the $k$-th ciphertext before its content has been used to produce the $\mathsf{CPA}$ game challenge. This closes our interval since the $k$-th plaintext is no longer confidential.
   \item decrypt the content of the $k$-th ciphertext after its content has been used to produce the $\mathsf{CPA}$ game challenge. This triggers a trivial win condition in the $\mathsf{CPA}$ game, the adversary thus loses the game and our interval closes like above.
   \end{enumerate}
 \end{remark}

\subsubsection{Any number of leaks : the \(\CPA\) case}
\label{sec:org3bdbee5}

This time, we are proving our theorem \ref{thUE} in the context of unrestricted leakage. Just like before, let $n$ be the size of the $\USMR^+$ and take $k\in \lbrace 1, \ldots, n\rbrace$. We place ourselves in a time interval where we do not consider the confidentiality of messages other than the $k$-th one. In this case the simulator \(\sigma_{k, \CPAp}\) works as follows. \label{simCPA2} It simulates the epoch keys and tokens and, on a \((\texttt{leakKey}, i)\) or \((\texttt{leakToken}, i)\) request, it checks if the event \(\EE^\leaked_{\Key, i}\) (respectively \(\EE^\leaked_{\Token, i}\)) exists in the Global Event History and leaks the corresponding epoch key (respectively token) to the adversary if it is the case and \(\perp\) otherwise. The simulator uses the ideal history to know which entries of the database correspond to fresh encryptions or updated encryptions. Together with its simulated epoch keys and tokens, this allows the simulator to maintain a simulated memory (and a simulated history) where fresh encryptions of $\M[k]$ (in the real world) are replaced with fresh encryptions of random plaintexts of length $\vert \M[k]\vert$ and updated ciphertexts encrypting $\M[k]$ (in the real world) are replaced with updates of ciphertexts of random plaintexts of length $\vert \M[k]\vert$. When $i\neq k$, the simulated memory and history perfectly match their real-world counterparts. Finally, on a \((\texttt{leak}, i)\) request, the simulator returns the \(i\)-th entry of its simulated memory.

\subsubsection{\(\ENCCPA+\UPDCPA\) security are sufficient for a secure construction of \(\cUSMR^+\)}
\label{sec:orgbbe97e7}
The fact that $\ENCCPA + \UPDCPA$ is sufficient to construct $\cUSMR^+$ from $\USMR^+$ and $\UpdKey$ is expressed in th.~\ref{thUE} through an intersection of specifications. The following theorem shows how we construct each of those specifications.

\begin{theorem}
  \label{thENCUPD}
  Let $\Sigma$ be a finite alphabet, $n\in \N$ and $k\in \lbrace 1, \ldots, n\rbrace$. The protocol $\mathsf{ue} := (\uec, \ues)$ described in figures \ref{uec_converter} and \ref{ues_converter} based on a UE scheme constructs the $\cUSMR^+_{\Sigma, n}$ from the basic $\USMR^+_{\Sigma, n}$ and $\UpdKey$ inside the interval $[P_{\mathtt{only}, k}(\EE), P_{\mathtt{compromised}, k}(\EE)]$, with respect to the simulator $\sigma_{k, \CPAp}$ described in \ref{simCPA2}. More specifically, we construct reductions $\CC'$ and $\CC''$ such that, for all distinguishers $\D$,

  \begin{align*}
  \Delta^\D(\uec^\C \ues^\S & [\USMR^+_{\Sigma, n},\UpdKey], \sigma_{k, \CPAp}^{\S}\cUSMR^+_{\Sigma, n}) \leq  \\ & q\cdot\Delta^{\D\CC'}(\G_0^\ENCCPA, \G_1^\ENCCPA) + r\cdot \Delta^{\D\CC''}(\G_0^\UPDCPA, \G_1^\UPDCPA)
  \end{align*}

  where $q$ (resp. $r$) is an upper bound on the number of writes (resp. updates) made by the distinguisher to the memory location $k$.
\end{theorem}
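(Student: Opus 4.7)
The plan is to mirror the proof of Theorem~\ref{thCPA}, but to split the transition from the real to the ideal world into two independent reductions---one to $\ENCCPA$ and one to $\UPDCPA$---instead of a single reduction to the asymmetric $\UECPA$ game. Write $\R := \uec^\C \ues^\S [\USMR^+_{\Sigma,n}, \UpdKey]$ and $\I := \sigma_{k,\CPAp}^\S \cUSMR^+_{\Sigma,n}$. The key new difficulty compared with Theorem~\ref{thCPA} is that leakage is now unrestricted: the view of slot $k$ must remain \emph{consistent} across arbitrarily many $(\texttt{leak}, k)$ queries inside a single epoch. The intermediate hybrid therefore has to maintain two parallel ciphertexts for slot $k$---a ``real-chain'' shadow and a simulated one---and switch between them at the right times.

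I introduce the hybrid $\SS$ whose behavior on slot $k$ is as follows: on a $(\texttt{write}, k, x)$ query it stores $\Enc_{k_e}(\bar{x})$ in memory for a fresh random $\bar{x}$ of length $|x|$, while keeping a shadow ciphertext $\Enc_{k_e}(x)$; on every $\ues$-update it overwrites the memory with $\Upd_\Delta(\text{shadow})$ and updates the shadow in the same way. Hence $\SS$ agrees with $\R$ on every leak that happens strictly after an update (both reveal an honest update of the real chain), and agrees with $\I$ on every leak that happens inside a fresh-write window (both reveal a fresh encryption of a random plaintext).

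To bound $\Delta^\D(\R,\SS)$, I define hybrids $\H_0=\R, \H_1,\dots,\H_q=\SS$, where $\H_j$ replaces the first $j$ fresh writes to slot $k$ by random-plaintext encryptions (with the shadow carrying the real plaintext, and updates recomputed from the shadow). The reduction $\CC'_j$ simulates everything else using the $\enc$ and $\upd$ oracles of the $\ENCCPA$ game, and on the $j$-th write to slot $k$ with input $x$ it challenges the game with the pair $(x,\bar{x})$, planting the returned ciphertext in memory; this value is only ever read during the fresh-write window of write $j$ because the very next $\ues$-update resets memory to $\Upd_\Delta(\text{shadow})$. Averaging over a uniformly random $j\in\{1,\dots,q\}$ and telescoping as in Section~\ref{sec:org63a24c5} yields
\[ \Delta^\D(\R,\SS) \le q\cdot \Delta^{\D\CC'}(\G_0^\ENCCPA, \G_1^\ENCCPA). \]

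To bound $\Delta^\D(\SS,\I)$, I define hybrids $\H'_0=\SS, \H'_1,\dots,\H'_r=\I$, where $\H'_i$ performs the first $i$ updates to slot $k$ on the simulated chain (memory $\leftarrow \Upd_\Delta(\text{memory})$, while still maintaining the shadow) and the remaining $r-i$ updates on the shadow as in $\SS$. Consecutive hybrids differ only at the $i$-th update; the reduction $\CC''_i$ submits to $\UPDCPA$ the pair formed by the current simulated-chain memory and the current shadow---both are old-epoch ciphertexts of equal plaintext length and both have been produced by the game's oracles---and plants the returned update as the new memory content. Telescoping over a uniformly random $i\in\{1,\dots,r\}$ gives
\[ \Delta^\D(\SS,\I) \le r\cdot \Delta^{\D\CC''}(\G_0^\UPDCPA, \G_1^\UPDCPA), \]
and the announced bound follows from the triangle inequality $\Delta^\D(\R,\I) \le \Delta^\D(\R,\SS) + \Delta^\D(\SS,\I)$. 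The main obstacle is the bookkeeping that guarantees neither reduction ever triggers a trivial-win condition of its underlying $\CPA$ game: the distinguisher's key and token leak pattern must, when replayed by the reduction, remain compatible with the insulated region imposed by the interval $[P_{\mathtt{only},k}(\EE), P_{\mathtt{compromised},k}(\EE)]$. This is precisely what circumvents the commitment problem, exactly as in the proof of Theorem~\ref{thCPA}.
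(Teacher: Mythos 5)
Your proposal is correct and takes essentially the same route as the paper's appendix proof of Theorem~\ref{thENCUPD}: both introduce the same intermediate hybrid $\SS$ (real behavior when leaking updated ciphertexts at slot $k$, ideal behavior when leaking fresh writes there), then bound $\Delta^\D(\R,\SS)$ by a telescope over the $(\texttt{write},k,\cdot)$ queries reducing to $\ENCCPA$ and bound $\Delta^\D(\SS,\I)$ by a telescope over the updates of slot $k$ reducing to $\UPDCPA$, concluding by the triangle inequality. Your explicit shadow-chain bookkeeping is a slightly more careful way to make the consistency of leaks inside a single epoch precise (the defining feature of $\USMR^+$), which the paper's prose only states informally; the reversed hybrid indexing and the swap of roles in the $\UPDCPA$ challenge pair are cosmetic and do not change the argument or the final bound.
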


The proof is very similar to the one detailed in the \(\UECPA\) case. It can be found in appendix \ref{sec:org41788ad}. 

In \cite{boyd20_fast_secur_updat_encry} the authors argued that \(\UECPA\) security is stronger than \(\ENCCPA + \UPDCPA\). Since we showed that \(\ENCCPA + \UPDCPA\) security is sufficient to securely construct the \(\cUSMR^+\) from the \(\USMR^+\) equipped with the \((\enc, \upd)\) converters, we conclude that \(\UECPA\) security cannot be necessary for this secure construction in the unrestricted leakage model. This notion is thus too strong in this setting.

\section{A composable and unified treatment of Private Information Retrieval}
\label{sec:pir}

\subsection{Our modelization of PIR in CC}

Recall that the $\ISMR$ is renamed $\DB$ to model PIR. The resource $\DB$ is described in fig.~\ref{DB}. It has the following interfaces :

\begin{itemize}
      \item Interface $\C_0$ : The user (or client) can use this interface to initialize the state of the database. When the user is done with this interface, he can send the $\texttt{initComplete}$ request which sets the $\textsc{Active}$ boolean to $\texttt{true}$. This turns off the interface $\C_0$ and unlocks all the others interfaces of the resource. This interface is useful to model PIR protocols working on encrypted or encoded data.
      \item Interface $\C$ : The user main interface. We rename $\texttt{askInteraction}$ into $\mathtt{query}(q)$ which sends a query $q$ to the resource. When the servers all respond, the user can also use $\mathtt{reconstruct}()$ to recover their answers.
      \item Interface $\S$ : The interface of the server, who will adopt a semi-honest (also called honest-but-curious) adversarial behavior. This means that the server accepts to plug a converter in its interface to handle $\texttt{answer}$ requests (we renamed $\texttt{interact}$ to $\texttt{answer}$) according to the PIR protocol. But the server will also try to use the information it has access to, mainly the requests of the clients (through $\texttt{getQuery}$ requests), the content of the database (through $\texttt{read}$ requests) and a log file of all honest actions (through $\texttt{getHist}$), in order to break the privacy guarantees of the clients.      
\end{itemize}

For simplicity we will present a database that can only process a single query and answer. To construct a database with multiples queries, one can compose multiple single-use databases in parallel or parameterize the database with the number of query treated. In order to give security guarantees for an unbounded-query database, one has to show that the security guarantees hold for any number of queries. We also limit the number of clients to one for simplicity but, as most PIR protocols support an arbitrary number of clients, it is also possible to increase the number of clients by duplicating the interface $\C$ or by letting each client connect to the same interface.

We will give a straight generalization of $\DB$ to multiple servers later on.

\begin{figure}[htbp]
\begin{mybox}{Resource $\DB_{\Sigma, n}$}
\vspace{-0.4cm}
  \begin{algorithm}[H]
\renewcommand{\thealgorithm}{}
\makeatletter
\renewcommand*{\ALG@name}{Initialization}
\makeatother
\caption{Initialization}
\begin{algorithmic}
	\State $\textsc{Init}, \textsc{Active} \leftarrow \texttt{false}$, $\M \leftarrow [~], \Hist \leftarrow [~]$, $q \leftarrow  \perp, a\leftarrow \perp$
\end{algorithmic}
\end{algorithm}

\vspace{-0.8cm}

\begin{multicols}{2}
\begin{algorithm}[H]
\renewcommand{\thealgorithm}{}
\makeatletter
\renewcommand*{\ALG@name}{Interface}
\makeatother
\caption{Interface $\C_0$}
\begin{algorithmic}
\Require {$(\texttt{init},  \M') \in \Sigma^n$}
         \If {\textbf{not} \Init}
             \State {$\M \leftarrow \M'$, $\Hist \leftarrow \Hist \mid\mid (0, \mathtt{init})$}
             \State {$\Hist \leftarrow \Hist \mid\mid (0, \mathtt{init})$}
             \State {$\Init \leftarrow \texttt{true}$}
         \EndIf

\Require {$(\texttt{read}, i) \in [1,n]$}
	\If {\Init \textbf{ and not} \Active}
	    \State {$\Hist \leftarrow \Hist \mid\mid (0, \texttt{R}, i)$}
	    \State {\Return $\M[i]$}
	\EndIf

\Require {$(\texttt{write}, i, x) \in [1,n]\times \Sigma$}
	\If {\Init \textbf{ and not} \Active}
            \State {$\Hist \leftarrow \Hist \mid\mid (0, \texttt{W}, i, x)$}
	    \State {$\M[i] \leftarrow x$}
	\EndIf

\Require {$\texttt{initComplete}$}
	\If {\Init \textbf{ and not} \Active}
            \State {$\Active \leftarrow \texttt{true}$}
	\EndIf	
\end{algorithmic}
\end{algorithm}

\columnbreak

\begin{algorithm}[H]
\renewcommand{\thealgorithm}{}
\makeatletter
\renewcommand*{\ALG@name}{Interface}
\makeatother
\caption{Interface $\C$}
\begin{algorithmic}
\Require {$(\texttt{query}, q')$}
	\If {\Active}
        \If {$q = \perp$}
            \State {$q \leftarrow q'$}
            \State {$\Hist \leftarrow \Hist \mid\mid (\texttt{query}, q)$}
        \EndIf
    \EndIf

\Require {$\texttt{reconstruct}$}
         \If {\Active}
             \If {$a \neq \perp$}
                 \State {\Return $a$}
             \EndIf
         \EndIf
\end{algorithmic}
\end{algorithm}

\vspace{-0.8cm}

\begin{algorithm}[H]
\renewcommand{\thealgorithm}{}
\makeatletter
\renewcommand*{\ALG@name}{Interface}
\makeatother
\caption{Interface $\S$}
\begin{algorithmic}
\Require {$(\texttt{answer}, a')$}
            \If {$q \neq \perp$ \textbf{ and } $a = \perp$}
	        \State {$a \leftarrow a'$}
                \State {$\Hist \leftarrow \Hist \mid\mid (\texttt{answer}, a)$}
            \EndIf
\Require {$(\texttt{read}, i) \in [1, n]$}
    \State {\Return $\M[i]$}
\Require {$\texttt{getQuery}$}
    \State {\Return $q$}
\Require {$\texttt{getHist}$}
    \State {\Return $\Hist$}
\end{algorithmic}
\end{algorithm}
\end{multicols}
\end{mybox}
\caption{\label{DB}Our $\ISMR$ viewed as a basic interactive database $\DB$ with size $n$ and alphabet $\Sigma$.}
\end{figure}

In order to construct a stronger database from the basic $\DB$, one can use a PIR protocol $(\Q, \AA, \RR)$ (see sec.~\ref{sec:pir_bg}) and use converters on interfaces $\C$ and $\S$. The client converter is described in fig.~\ref{pir_client} and the server converter in fig.~\ref{pir_server}. Note that the client converter $\mathsf{pir}_{\mathsf{cli}}$ modifies the request $\texttt{query}$ by changing its parameter to be an index $i\in [1,n]$ corresponding to the database entry he wishes to retrieve. The server converter does the same by modifying the $\texttt{answer}$ request to have no parameter at all, the converter taking care of computing an answer using the $\AA$ algorithm of the PIR protocol. Moreover, the $\mathsf{pir}_{\mathsf{ser}}$ converter makes sure that the $\texttt{answer}$ request produces well formed answers for the clients but this converter does not alter the behavior of $\texttt{getQuery}$ and $\texttt{read}$ requests at interface $\S$. This means that those requests can still be used by the server to gather information and try to find out the index $i$ of interest for the client.

A converter can also be used on interface $\C_0$ if the database content needs to be modified in any way. For example, a converter on $\C_0$ can be used to encode the database if the PIR protocol $(\Q, \AA, \RR)$ relies on an error correcting code.

\begin{figure}[htbp]
\begin{mybox}{Converter $\mathsf{pir}_{\mathsf{cli}}$}
\vspace{-0.4cm}
  \begin{algorithm}[H]
\renewcommand{\thealgorithm}{}
\makeatletter
\renewcommand*{\ALG@name}{Initialization}
\makeatother
\caption{Initialization}
\begin{algorithmic}
	\State {$ind \leftarrow \perp, s\leftarrow \perp$}
\end{algorithmic}
\end{algorithm}

\vspace{-0.8cm}

\begin{algorithm}[H]
\renewcommand{\thealgorithm}{}
\makeatletter
\renewcommand*{\ALG@name}{Interface}
\makeatother
\caption{Interface \texttt{out}}
\begin{algorithmic}
	\Require {$(\texttt{query}, i)\in [1, n]$}
        \If {$ind = \perp$}
            \State {$s \twoheadleftarrow \mathcal{S}$}
            \State {$ind \leftarrow i$}
            \State {$q \leftarrow \Q(ind ,s)$}
            \State {$\mathbf{output}\: (\texttt{query}, q)$ at interface $\C$ of $\DB$}
        \EndIf

    \Require {$\texttt{reconstruct}$}
        \State {$\mathbf{output}\: \texttt{reconstruct}$ at interface $\C$ of $\DB$}
        \State {Let $a$ be the result}
        \If {$a \neq \perp$}
            \State {\Return $\RR(a, ind, s)$}
        \EndIf
\end{algorithmic}
\end{algorithm}

\end{mybox}
\caption{\label{pir_client} Description of the client converter $\mathsf{pir}_{\mathsf{cli}}$ for a PIR protocol $(\Q, \AA, \RR)$.}
\end{figure}

\begin{figure}[htbp]
  \begin{mybox}{Converter $\mathsf{pir}_{\mathsf{ser}}$}
    \vspace{-0.4cm}
\begin{algorithm}[H]
\renewcommand{\thealgorithm}{}
\makeatletter
\renewcommand*{\ALG@name}{Interface}
\makeatother
\caption{Interface \texttt{out}}
\begin{algorithmic}
    \Require {$\texttt{answer}$}
        \State {$\mathbf{output}\: \texttt{getQuery}$ at interface $\S$ of $\DB$}
        \State {Let $q$ be the result}
            \If {$q \neq \perp$}
                \State {Retrieve $\M$ with $\texttt{read}$ requests at interface $\S$}
                \State {$a \leftarrow \AA(\M, q)$}
                \State {$\mathbf{output}\: (\texttt{answer}, a)$ at interface $\S$ of $\DB$}
            \EndIf
\end{algorithmic}
\end{algorithm}

\end{mybox}
\caption{\label{pir_server} Description of the server converter $\mathsf{pir}_{\mathsf{ser}}$ for a PIR protocol $(\Q, \AA, \RR)$.}
\end{figure}

We now introduce a new database resource $\PrivDB$ with stronger security guarantees that we hope to achieve with a PIR protocol. When we talk about ``stronger security guarantees'' we mean that the capabilities of the adversary at its interface $\S$ will be weakened. In $\DB$, the adversary had access to the queries via the $\texttt{getQuery}$ request and had also access to a log file which contained the query, the answer and the possible requests sent by $\C_0$ during the initialization of the resource. In order to ensure privacy for the client's queries, the database $\PrivDB$ will hide those queries and answers from the adversary. For the PIR protocol to achieve this construction, the ideal world consisting of the private database $\PrivDB$ (together with a simulator) has to be indistinguishable from a basic database $\DB$ equipped with the PIR protocol $(\mathsf{pir}_{\mathsf{cli}}, \mathsf{pir}_{\mathsf{ser}})$ with algorithms $(\Q, \AA, \RR)$ (called the real world). This means that whatever the semi-honest adversary can do on $\PrivDB$ he could just as well do on $\DB$ equipped with the PIR protocol. Since the queries and answers are not available to the adversary in $\PrivDB$ they should be useless to him in the real world and in particular they should not help him get any information about the index $i$ desired by the user.

Furthermore, we will also require the resource $\PrivDB$ to be ``correct'' in the sense that when sending the $\mathtt{reconstruction}$ request (after a query and an answer), the user gets back the database record he queried. The resource $\PrivDB$ is given in fig.~\ref{PrivDB}.

\begin{figure}[htbp]
\begin{mybox}{Resource $\PrivDB_{\Sigma, n}$}
    \vspace{-0.4cm}
  \begin{algorithm}[H]
\renewcommand{\thealgorithm}{}
\makeatletter
\renewcommand*{\ALG@name}{Initialization}
\makeatother
\caption{Initialization}
\begin{algorithmic}
	\State $\textsc{Init}, \textsc{Active} \leftarrow \texttt{false}$, $\M \leftarrow [~], \Hist \leftarrow [~]$, $q \leftarrow \perp, a \leftarrow \perp, ind \leftarrow \perp$
\end{algorithmic}
\end{algorithm}

\vspace{-0.8cm}

\begin{multicols}{2}
\begin{algorithm}[H]
\renewcommand{\thealgorithm}{}
\makeatletter
\renewcommand*{\ALG@name}{Interface}
\makeatother
\caption{Interface $\C$}
\begin{algorithmic}
\Require {$(\texttt{query}, i)\in [1, n]$}
	\If {\Active}
	    \If {$q = \perp$}
	        \State {$q \leftarrow \texttt{ok}$}
                \State {$\Hist \leftarrow \Hist \mid\mid (\texttt{query})$}
                \State {$ind  \leftarrow i$}
	    \EndIf
        \EndIf

\Require {$\texttt{reconstruct}$}
         \If {\Active}
             \If {$a \neq \perp$}
                 \State {\Return $\M[ind]$}\footnote{If the server has the capability to overwrite data, we have to return the $ind$-th record of the database as it was when the answer was computed.}
             \EndIf
         \EndIf
\end{algorithmic}
\end{algorithm}

\columnbreak

\begin{algorithm}[H]
\renewcommand{\thealgorithm}{}
\makeatletter
\renewcommand*{\ALG@name}{Interface}
\makeatother
\caption{Interface $\S$}
\begin{algorithmic}
\Require {$\texttt{answer}$}
	\If {\Active}
            \If {$q \neq \perp$ \textbf{ and } $a = \perp$}
	        \State {$a \leftarrow \texttt{ok}$}
                \State {$\Hist \leftarrow \Hist \mid\mid (\texttt{answer})$}
            \EndIf
    \EndIf
\Require {$(\texttt{read}, i) \in [1, n]$}
    \State {\Return $\M[i]$}
\Require {$\texttt{getQuery}$}
    \State {\Return $q$}\footnote{The two possible values for $q$ are $\perp$ and $\texttt{ok}$.}
\Require {$\texttt{getHist}$}
    \State {\Return $\Hist$}
\end{algorithmic}
\end{algorithm}

\end{multicols}

\end{mybox}
\caption{\label{PrivDB}The private database $\PrivDB$ where queries and answers are useless to an adversary (privacy) and where the client gets the item he queried (correctness). Interface $\C_0$ remains unchanged.}
\end{figure}

The construction is valid if the resource $\DB$ with the PIR protocol is indistinguishable from the resource $\PrivDB$ with a simulator (see fig.~\ref{usmr_constr} of sec.~\ref{sec:ismr}). The simulator is here to simulate the real world behavior from the adversary point of view. The simulator is thus plugged in the $\S$ interface, processing and choosing the return values of each request of the adversary at this interface. Here, the goal of the simulator is twofold.

First, it has to simulate a real-world history with an ideal one. Recall that the ideal history doesn't have any query or answer in it and that the simulator has no way to know which index $i$ is wanted by the user. The simulation will then be as follows : the simulator will choose an index $i'$ uniformly at random and compute a query $q$ and an answer $a$ for this index using the algorithms $\Q$ and $\AA$. It will then replace the ideal query and answer of the ideal history with the simulated query and answer.

The second goal of the simulator is to maintain a simulated database if need be. Indeed, since the adversary can read the database using $(\texttt{read}, i)$ requests at interface $\S$ and since there is no protocol on the ideal resource, the simulator has to apply any transformation that the protocol could have used on the database records. For example, the simulator has to encode the ideal database content if an error correcting code is used by the real world protocol or encrypt the database if the protocol has done so in the real world. Then on a $\mathtt{read}(i)$ request to the ideal resource, the simulator returns the $i$-th record of the simulated database. The simulator is given in full detail in fig.~\ref{simPriv}.

We claim that telling apart the real view from the simulated one is as hard as attacking the privacy property of the PIR protocol. Indeed, the $\mathtt{query}$ observable behavior at interface $\C$ is the same in both worlds and the same holds for $\mathtt{answer}$. For $\mathtt{reconstruction}$ the real world has to match the ideal one where it always returns the database record queried by the user. This is the case if the PIR protocol is correct. At interface $\S$, the difference between the real and the simulated history is that the query and answer of the real one are computed using the index chosen by the client whereas in the simulated history they are computed using a random index. For a distinguisher to tell if a query and answer pair was computed using the index $i$ he chose (since it has access to the client interface) or a different index, he would have to learn some information about the index used in this execution of the PIR protocol (mainly if it was computed using $i$ or not) thus breaking the privacy property. For the $\mathtt{read}$ request available at the $\S$ interface, the simulator knows which protocol was used at the interface $\C_0$ to setup the database (with error correcting codes, encryption, \ldots) and can thus perfectly simulate the real database by applying the same transformation to the ideal database (which he has access to thanks to $\mathtt{read}$ requests).

We can conclude that if a correct and private PIR protocol is used in the construction, the advantage of the distinguisher in distinguishing the real world from the ideal world is no greater than the advantage of an attacker in telling apart a pair $(q,a)$ of the protocol's query and answer computed using an index $i$ chosen by the attacker from a pair computed using a random index.

\begin{theorem}
  Let $\Sigma$ be a finite alphabet and $n\in \N$. The protocol $\mathsf{pir} := (\mathsf{pir}_\mathsf{cli}, \mathsf{pir}_\mathsf{ser})$ described in figures \ref{pir_client} and \ref{pir_server} based on a PIR scheme $(\Q, \AA, \RR)$ constructs the private database $\PrivDB_{\Sigma, n}$ from the basic database $\DB_{\Sigma, n}$, with respect to the simulator $\mathsf{sim}^{\Q, \AA, \RR}_{Priv}$ as defined in  fig.~\ref{simPriv} and the dummy converter $\mathsf{honDB}$ (that disables any adversarial behavior). More specifically, we construct a reduction $\mathbf{C}$ such that, for all distinguishers $\mathbf{D}$,
  \[ \Delta^\mathbf{D}(\mathsf{honDB}^\S\mathsf{pir}_{\mathsf{cli}}^\C \mathsf{pir}_{\mathsf{ser}}^\S \DB_{\Sigma, n},  \mathsf{honDB}^\S\PrivDB_{\Sigma, n}) = 0  \]
  \[ \textit{and } \Delta^\mathbf{D}(\mathsf{pir}_{\mathsf{cli}}^\C \mathsf{pir}_{\mathsf{ser}}^\S \DB_{\Sigma, n},  (\mathsf{sim}^{\Q, \AA, \RR}_{Priv})^{\S}\PrivDB_{\Sigma, n}) = \Delta^{\mathbf{D}\mathbf{C}}(\mathbf{G}_0, \mathbf{G}_1)\footnote{If the PIR protocol is IT-secure, then this advantage is 0 for all distinguishers.}  \]
  where the privacy games $\mathbf{G}_b$ are described in fig.~\ref{game}. 
\end{theorem}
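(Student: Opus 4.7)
The plan is to treat the two displayed equations separately. The availability identity follows directly from the correctness of $(\Q, \AA, \RR)$. With the dummy converter $\mathsf{honDB}$ plugged at interface $\S$, all adversarial capabilities at that interface are disabled, so the distinguisher's only non-trivial observation is the output of a $\mathtt{reconstruct}$ at interface $\C$ after a query. In the real world $\mathsf{pir}_{\mathsf{cli}}$ outputs $\RR(\AA(\M, \Q(i, s)), i, s)$, which equals $\M[i]$ by PIR correctness, and in the ideal world $\PrivDB$ returns $\M[ind] = \M[i]$ by construction; all remaining interfaces behave identically. Hence the advantage is $0$.

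For the security bound, I would build a reduction $\mathbf{C}$ that internally plays the role of the resource and forwards only the query-generation step to the privacy game $\mathbf{G}_b$. Concretely, $\mathbf{C}$ keeps its own copies of $\M$ and $\Hist$, relays all $\C_0$ requests and all ``honest'' $\S$ requests (i.e.\ $\texttt{read}$) to this internal state, and answers the client's $(\mathtt{query}, i)$ by sending $i$ to $\mathbf{G}_b$. The game returns a pair $(q, a)$ computed either from $i$ (when $b = 0$) or from a freshly sampled random index $i'$ (when $b = 1$). The reduction then exposes $q$ on $\mathtt{getQuery}$, appends $(\mathtt{query}, q)$ and $(\mathtt{answer}, a)$ to $\Hist$, and answers $\mathtt{reconstruct}$ at $\C$ with $\M[i]$ directly: this matches $\RR(a, i, s)$ in the real world by correctness, and matches the behavior of $\PrivDB$ in the ideal world by definition. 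Consequently $\mathbf{C}\mathbf{G}_0$ reproduces exactly $\mathsf{pir}_{\mathsf{cli}}^\C \mathsf{pir}_{\mathsf{ser}}^\S \DB_{\Sigma, n}$ while $\mathbf{C}\mathbf{G}_1$ reproduces exactly $(\mathsf{sim}^{\Q, \AA, \RR}_{Priv})^\S \PrivDB_{\Sigma, n}$.

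The main subtlety is making sure that, uniformly in $b$, the internal view at $\S$ matches each world on the $\mathtt{read}$ and $\mathtt{getHist}$ requests as well. If a converter has been plugged at $\C_0$ to transform the database (e.g.\ encoding or encryption in certain PIR variants), the reduction must maintain the transformed memory consistently, so that $\mathtt{read}$ requests return identical values in both worlds; this is precisely the second role of $\mathsf{sim}^{\Q, \AA, \RR}_{Priv}$ described in fig.~\ref{simPriv}. Once those invariants are in place, the only residual difference between $\mathbf{C}\mathbf{G}_0$ and $\mathbf{C}\mathbf{G}_1$ is which index was used to generate $(q, a)$, and the claimed identity $\Delta^{\mathbf{D}}(\cdot, \cdot) = \Delta^{\mathbf{D}\mathbf{C}}(\mathbf{G}_0, \mathbf{G}_1)$ follows. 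The footnote on IT-PIR then comes for free, since $t$-privacy with $t \geq 1$ makes the two game distributions statistically identical from the adversary's view, so the advantage vanishes for every $\mathbf{D}$.
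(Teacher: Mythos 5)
Your proposal matches the paper's own argument almost exactly: the paper proves the theorem by exhibiting a reduction $\mathbf{C}$ (their fig.~\ref{reduction}) that internally simulates the memory, history and interface logic, sends $(\texttt{init},\M)$ and the challenge index $i$ to $\mathbf{G}_b$, and places the returned $(q,a)$ pair where the real system would put the protocol's query and answer, so that $\mathbf{C}\mathbf{G}_0 \equiv \mathsf{pir}_{\mathsf{cli}}^\C \mathsf{pir}_{\mathsf{ser}}^\S \DB_{\Sigma, n}$ and $\mathbf{C}\mathbf{G}_1 \equiv (\mathsf{sim}^{\Q, \AA, \RR}_{Priv})^{\S}\PrivDB_{\Sigma, n}$ hold as exact equalities of systems, which is precisely the construction you describe. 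You even address the two points the paper leaves implicit --- the availability equation (which the paper dismisses with a blanket remark about correctness) and the need to maintain the $\C_0$-transformed database so that \texttt{read} queries at $\S$ agree in both worlds --- so this is the same proof with slightly more narrative justification; the only small caution is to make sure $(\texttt{answer}, a)$ is appended to $\Hist$ only when $\S$ actually issues \texttt{answer}, not at query time, so that \texttt{getHist} timing matches the real system.
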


\begin{proof}
  We need to find a reduction $\mathbf{C}$ such that $\mathbf{C}\mathbf{G}_0 \equiv  \mathsf{pir}_{\mathsf{cli}}^\C \mathsf{pir}_{\mathsf{ser}}^\S \DB_{\Sigma, n}$ and $\mathbf{C}\mathbf{G}_1 \equiv (\mathsf{sim}^{\Q, \AA, \RR}_{Priv})^{\S}\PrivDB_{\Sigma, n}$. Then, for all distinguisher $\mathbf{D}$, we will have that the advantage of $\mathbf{D}$ in distinguishing the real system from the ideal one is
\begin{align*}
  \Delta^\mathbf{D}(\mathsf{pir}_{\mathsf{cli}}^\C \mathsf{pir}_{\mathsf{ser}}^\S \DB_{\Sigma, n},  (\mathsf{sim}^{\Q, \AA, \RR}_{Priv})^{\S}\PrivDB_{\Sigma, n}) & = \Delta^\mathbf{D} (\mathbf{C}\mathbf{G}_0, \mathbf{C}\mathbf{G}_1) \\
  & = \Delta^{\mathbf{D}\mathbf{C}}(\mathbf{G}_0, \mathbf{G}_1)
\end{align*}
which is the result we are looking for. The reduction $\mathbf{C}$ is given in fig.~\ref{reduction}.

\begin{figure}[htbp]
\begin{mybox}{Game $\mathbf{G}_b$}
\vspace{-0.4cm}
  \begin{algorithm}[H]
\renewcommand{\thealgorithm}{}
\makeatletter
\renewcommand*{\ALG@name}{Initialization}
\makeatother
\caption{Initialization}
\begin{algorithmic}
	\State $\M\leftarrow [~], n\leftarrow \perp$
\end{algorithmic}
\end{algorithm}

\vspace{-0.8cm}

\begin{algorithm}[H]
\renewcommand{\thealgorithm}{}
\makeatletter
\renewcommand*{\ALG@name}{Interface}
\makeatother
\caption{Interface $\mathbf{out}$}
\begin{algorithmic}
\Require {$(\texttt{init}, \M')$}
  \State {$\M \leftarrow \M'$}
  \State {$n\leftarrow \vert \M\vert$}
  
\Require {$(\texttt{chall}, i)$}
    \State {$s \twoheadleftarrow \mathcal{S}$}
    \If {$b = 0$}    
      \State {$j\leftarrow i$}
    \Else
      \State {$j\twoheadleftarrow [n]$}
    \EndIf
    \State {$q \leftarrow \Q(j, s)$}
    \State {$a \leftarrow \AA(\M, q)$}
    \State {\Return $(q,a)$}
\end{algorithmic}
\end{algorithm}
\end{mybox}
\caption{\label{game}The privacy games for the PIR protocol $(\Q, \AA, \RR)$.}
\end{figure}

\begin{figure}[htbp]
\begin{mybox}{Reduction $\mathbf{C}$}
\vspace{-0.4cm}
  \begin{algorithm}[H]
\renewcommand{\thealgorithm}{}
\makeatletter
\renewcommand*{\ALG@name}{Initialization}
\makeatother
\caption{Initialization}
\begin{algorithmic}
	\State $\textsc{Init}, \textsc{Active}, \textsc{Queried}, \textsc{Answered} \leftarrow \texttt{false}$
	\State $\M \leftarrow [~], \Hist \leftarrow [~]$, $q \leftarrow  \perp, a\leftarrow \perp, ind\leftarrow \perp$
\end{algorithmic}
\end{algorithm}

\vspace{-0.8cm}

\begin{multicols}{2}
\begin{algorithm}[H]
\renewcommand{\thealgorithm}{}
\makeatletter
\renewcommand*{\ALG@name}{Interface}
\makeatother
\caption{Interface $\C_0$}
\begin{algorithmic}
\Require {$(\texttt{init},  \M') \in \Sigma^n$}
         \If {\textbf{not} \Init}
             \State {$\M \leftarrow \M'$}
             \State {$\Hist \leftarrow \Hist \mid\mid (0, \mathtt{init})$}
             \State {$\Init \leftarrow \texttt{true}$}
         \EndIf
\Require {$(\texttt{read}, i) \in [1,n]$}
	\If {\Init \textbf{ and not} \Active}
	    \State {$\Hist \leftarrow \Hist \mid\mid (0, \texttt{R}, i)$}
	    \State {\Return $\M[i]$}
	\EndIf
\Require {$(\texttt{write}, i, x) \in [1,n]\times \Sigma$}
	\If {\Init \textbf{ and not} \Active}
            \State {$\Hist \leftarrow \Hist \mid\mid (0, \texttt{W}, i, x)$}
	    \State {$\M[i] \leftarrow x$}
	\EndIf
\Require {$\texttt{initComplete}$}
	\If {\Init \textbf{ and not} \Active}
      \State {$\Active \leftarrow \texttt{true}$}
      \State {Send $(\texttt{init}, \M)$ at interface $\texttt{out}$ of $\mathbf{G}_b$}
      \EndIf
\end{algorithmic}
\end{algorithm}

\columnbreak

\begin{algorithm}[H]
\renewcommand{\thealgorithm}{}
\makeatletter
\renewcommand*{\ALG@name}{Interface}
\makeatother
\caption{Interface $\C$}
\begin{algorithmic}
\Require {$(\texttt{query}, i)$}
  \If {\Active\textbf{ and not} \textsc{Queried}}
    \State {$\textsc{Queried} \leftarrow \texttt{true}$}
    \State {$(q, a) \leftarrow (\texttt{chall}, i)$ at interface $\texttt{out}$ of $\mathbf{G}_b$}
    \State {$ind \leftarrow i$}
    \State {$\Hist \leftarrow \Hist \mid\mid (\texttt{query}, q)$}
    \EndIf

\Require {$\texttt{reconstruct}$}
         \If {\Active\textbf{ and} \textsc{Answered}}
                 \State {\Return $\M[ind]$}
                 \EndIf
\end{algorithmic}
\end{algorithm}

\vspace{-0.8cm}

\begin{algorithm}[H]
\renewcommand{\thealgorithm}{}
\makeatletter
\renewcommand*{\ALG@name}{Interface}
\makeatother
\caption{$\S$}
\begin{algorithmic}
\Require {$\texttt{answer}$}
  \If {\textsc{Queried}}
    \State {$\textsc{Answered} \leftarrow \texttt{true}$}
    \State {$\Hist \leftarrow \Hist \mid\mid (\texttt{answer}, a)$}
    \EndIf
\Require {$(\texttt{read}, i) \in [1, n]$}
    \State {\Return $\M[i]$}
\Require {$\texttt{getQuery}$}
  \If {\textsc{Queried}}    
    \State {\Return $q$}
  \Else
    \State {\Return $\perp$}
    \EndIf
\Require {$\texttt{getHist}$}
  \State {\Return $\Hist$}
\end{algorithmic}
\end{algorithm}

\end{multicols}

\end{mybox}
\caption{\label{reduction}The reduction $\mathbf{C}$ connected to a privacy game $\mathbf{G}_b$.}
\end{figure}

This reduction, when connected to the game $\mathbf{G}_0$, exhibits the exact same behavior as the real system $\DB$ equipped with the protocol, and, when connected to the game $\mathbf{G}_1$, exhibits the exact same behavior as the ideal system $\PrivDB$ equipped with the simulator.
\end{proof}

Note that the proof is the same whether the PIR protocol is computationally secure or information-theoretically secure. In the first case, we consider efficient (\ie poly-time) distinguishers while in the other one we consider all possible distinguishers. In both cases, we construct the same private database $\PrivDB$, with distance (\ie best distinguishing advantage) $0$ with respect to all possible distinguishers in the IT setting and with a non-zero distance, linked to the underlying computational assumption, with respect to only efficient distinguishers in the computational setting. Since Constructive Cryptography is a composable framework, we can use this construction as a step in a construction of a stronger database or a more general system thanks to the composition theorem.

\begin{remark}
Since the random string $s$ does not leak in any way to the distinguisher or the curious server (which is accurate since it is known only to the client), the distinguisher cannot use the reconstruction algorithm $\RR$ to tell if the query and answer pair he observes really comes from the index $i$ he chose or not. 
\end{remark}

\begin{figure}[htbp]
\begin{mybox}{Simulator $\mathsf{sim}^{\Q, \AA, \RR}_{Priv}$}
\vspace{-0.4cm}
\begin{algorithm}[H]
\renewcommand{\thealgorithm}{}
\makeatletter
\renewcommand*{\ALG@name}{Initialization}
\makeatother
\caption{Initialization}
\begin{algorithmic}
	\State $H \leftarrow [~], \M_{sim}\leftarrow [~], pos \leftarrow 1$, $s\leftarrow \perp, ind\leftarrow \perp, q \leftarrow \perp, a\leftarrow \perp$
\end{algorithmic}
\end{algorithm}

\vspace{-0.8cm}

\begin{algorithm}[H]
\renewcommand{\thealgorithm}{}
\makeatletter
\renewcommand*{\ALG@name}{Interface}
\makeatother
\caption{Interface $\S$}
\begin{algorithmic}
\Require {$\texttt{answer}$}
    \State $\textsc{Update}()$
    \State \Return $a$
  
\Require {$\texttt{getHist}$}
	\State $\textsc{Update}()$
	\State \Return $H$

\Require {$\texttt{getQuery}$}
	\State $\textsc{Update}()$
	\State \Return $q$
	
\Require {$(\texttt{read}, i)$}
	\State $\textsc{Update}()$
	\State \Return $\M_{sim}[i]$
\end{algorithmic}
\end{algorithm}
\vspace{-0.6cm}
\begin{algorithmic}
	\Procedure {\textsc{Update}}{}
            \State $\M' \leftarrow [\mathtt{read}(i)\text{ to } \PrivDB ~ \mathbf{for}~ i = 1\ldots n]$
            \State $\M_{sim} \leftarrow \mathcal{P}_{\C_0}(\M')$
            \State $\Hist \leftarrow \mathtt{getHist}\text{ to } \PrivDB$
            \For {$j = pos\ldots \vert \Hist\vert$}
                 \If {$\Hist[j] = (\mathtt{query})$}
                     \State $ind \twoheadleftarrow [1, n]$
                     \State $s \twoheadleftarrow \mathcal{S}$
                     \State $q \leftarrow \Q(ind, s)$
                     \State $H \leftarrow H\mid\mid (\mathtt{query}, q)$
                 \ElsIf {$\Hist[j] = (\mathtt{answer})$}
                     \State $a \leftarrow \AA(\M, q)$
                     \State $H \leftarrow H\mid\mid (\mathtt{answer}, a)$
                 \Else
                     \State $H \leftarrow H\mid\mid \Hist[j]$
                 \EndIf
                 \State $pos \leftarrow pos + 1$
            \EndFor
	\EndProcedure
		
\end{algorithmic}
\end{mybox}
\caption{\label{simPriv}The simulator for the protocol composed of $(\Q, \AA, \RR)$. $\mathcal{P}_{\C_0}$ denotes the transformation applied to the memory $\M$ by a possible converter plugged in interface $\C_0$.}
\end{figure}

The efficiency of the PIR protocol is given by : $\vert q\vert + \vert a\vert$ for its communication complexity, and its computational complexity is given by the sum of those of $\Q$, $\AA$ and $\RR$.

\subsection{Generalization to multiple servers}

We present $\MultDB$, a generalization of $\DB$ in the multiple server case. For the sake of simplicity, we put the same database on all servers but it is possible to modify the behavior of interface $\C_0$ to allow the client to modify each database independently. Similarly, we generalize the converters into $\mathsf{mult\_pir}_{\mathsf{cli}}$ and $\mathsf{mult\_pir}_{\mathsf{ser}_j}$ in the multiple server setting. Those converters are described in figures \ref{multpir_client} and \ref{multpir_server}. The three algorithms $\Q, \AA$ and $\RR$ are still present and used in the same way as before.

In the multi-server case, we introduce a threshold $t$ which allows the curious servers to form a coalition of size at most $t$ in order to share the capabilities of their interfaces. This means that each server in the coalition can see the histories of the other members as well as read their memory. The database $\MultDB_{n, k, t}$ can be configured by choosing $n$ the size of the database, $k$ the number of servers holding this database (or a modified/distributed version of it) and $t$ the largest number of malicious servers allowed to cooperate. This resource is described in fig.~\ref{MultDB}. 

We can then define a new resource with better security guarantees than those of $\MultDB$. Just like in the one server case, the new resource $\PrivMultDB$ requires that $\texttt{reconstruct}$ returns the database record desired by the client when he issued its $\texttt{query}$ request. $\PrivMultDB$ also requires that the curious servers $\S_{j}$ do not get to see the queries and answers when using $\texttt{getHist}$. This resource is described in fig.~\ref{PrivMultDB}. For the construction of $\PrivMultDB_{n,k,t}$ to hold, we require the PIR protocol used to be both correct and $t$-private.

\begin{figure}[htbp]
  \begin{mybox}{Resource $\MultDB_{\Sigma, n, k, t}$}
    \vspace{-0.4cm}
  \begin{algorithm}[H]
\renewcommand{\thealgorithm}{}
\makeatletter
\renewcommand*{\ALG@name}{Initialization}
\makeatother
\caption{Initialization}
\begin{algorithmic}
	\State $\textsc{Init}, \textsc{Active}, \textsc{Coalition} \leftarrow \texttt{false}$
	\State $\Hist_j \leftarrow [~], q_j \leftarrow  \perp, a_j\leftarrow \perp, b_j \leftarrow \texttt{false } \mathbf{for} ~ j = 1\ldots k$
	\State $\M \leftarrow [~]$
\end{algorithmic}
\end{algorithm}

\vspace{-0.8cm}

\begin{multicols}{2}
\begin{algorithm}[H]
\renewcommand{\thealgorithm}{}
\makeatletter
\renewcommand*{\ALG@name}{Interface}
\makeatother
\caption{Interface $\C_0$}
\begin{algorithmic}
\Require {$(\texttt{init}, \M') \in \Sigma^n$}
         \If {\textbf{not} \Init}
             \State {$\M \leftarrow \M'$}
             \For {$j = 1\ldots k$}
                  \State {$\Hist_j \leftarrow \Hist_j \mid\mid (0, \mathtt{init})$}
             \EndFor
             \State {$\Init \leftarrow \texttt{true}$}
         \EndIf

\Require {$(\texttt{read}, i) \in [1,n]$}
	\If {\Init \textbf{ and not} \Active}
            \For {$j = 1\ldots k$}
	         \State {$\Hist_j \leftarrow \Hist_j \mid\mid (0, \texttt{R}, i)$}
            \EndFor
	    \State {\Return $\M[i]$}
	\EndIf

\Require {$(\texttt{write}, i, x) \in [1,n]\times \Sigma$}
	\If {\Init \textbf{ and not} \Active}
            \For {$j = 1\ldots k$}
                 \State {$\Hist_j \leftarrow \Hist_j \mid\mid (0, \texttt{W}, i, x)$}
            \EndFor
	    \State {$\M[i] \leftarrow x$}
	\EndIf

\Require {$\texttt{initComplete}$}
	\If {\Init \textbf{ and not} \Active}
            \State {$\Active \leftarrow \texttt{true}$}
	\EndIf	
\end{algorithmic}
\end{algorithm}

\columnbreak

\begin{algorithm}[H]
\renewcommand{\thealgorithm}{}
\makeatletter
\renewcommand*{\ALG@name}{Interface}
\makeatother
\caption{Interface $\C$}
\begin{algorithmic}
\Require {$(\texttt{query}, q'_1, \ldots, q'_k)$}
	\If {\Active}
	    \If {$(q_1, \ldots, q_k) = (\perp, \ldots, \perp)$}
            \For {$j = 1\ldots k$}
                 \State {$q_j \leftarrow q'_j$}
                     \State {$\Hist_j \leftarrow \Hist_j \mid\mid (\texttt{query}, j,  q_j)$}
                \EndFor
	    \EndIf
        \EndIf

\Require {$\texttt{reconstruct}$}
         \If {\Active}
             \If {$a_1 \neq \perp \mathbf{and} ~\ldots ~ \mathbf{and} ~a_k \neq \perp$}
                 \State {\Return $(a_1, \ldots, a_k)$}
             \EndIf
         \EndIf
\end{algorithmic}
\end{algorithm}

\vspace{-0.8cm}

\begin{algorithm}[H]
\renewcommand{\thealgorithm}{}
\makeatletter
\renewcommand*{\ALG@name}{Interface}
\makeatother
\caption{Interface $\W$}
\begin{algorithmic}
	\Require {\small{$(\texttt{formCoalition}, b'_1, \ldots, b'_k)~ \in~ \lbrace \texttt{true}$, $\texttt{false}\rbrace^k$}}
                 \If {$\vert \lbrace 1\leq j\leq k \mid b'_j = \mathtt{true}\rbrace\vert \leq t$ \textbf{ and not } $\textsc{Coalition}$}
                     \State {$(b_1, \ldots, b_k) \leftarrow (b'_1, \ldots, b'_k)$}
                     \State {$\textsc{Coalition} \leftarrow \texttt{true}$}
                 \EndIf
\end{algorithmic}
\end{algorithm}

\end{multicols}

\vspace{-0.8cm}

\begin{algorithm}[H]
\renewcommand{\thealgorithm}{}
\makeatletter
\renewcommand*{\ALG@name}{Interface}
\makeatother
\caption{Interface $\S_{j}, j\in [1,k]$}
\begin{algorithmic}
\Require {$(\texttt{answer}, a)$}
	\If {\Active}
        \If {$q_j \neq \perp$ \textbf{ and } $a_j = \perp$}
                \State {$a_j \leftarrow a$}
                \State {$\Hist_j \leftarrow \Hist_j \mid\mid (\texttt{answer}, j, a_j)$}
            \EndIf
            \EndIf
\Require{$(\texttt{read}, i)\in [1, n]$}
    \State {\Return $\M[i]$}
\Require {$\texttt{getHist}$}
  \If {\textsc{Coalition} \textbf{ and } $b_j$}
      \State {\Return $\Hist_j$}
  \EndIf
\Require {$\texttt{getQuery}$}
    \If {\textsc{Coalition} \textbf{ and } $b_j$}
        \State {\Return $q_j$}
    \EndIf
\end{algorithmic}
\end{algorithm}

\end{mybox}
\caption{\label{MultDB}The basic database $\MultDB$ for $k$ servers where at most $t$ of them can form a coalition, allowing them to share information together.}
\end{figure}

\begin{figure}[htbp]
\begin{mybox}{Converter $\mathsf{mult\_pir}_{\mathsf{cli}}$}
    \vspace{-0.4cm}
  \begin{algorithm}[H]
\renewcommand{\thealgorithm}{}
\makeatletter
\renewcommand*{\ALG@name}{Initialization}
\makeatother
\caption{Initialization}
\begin{algorithmic}
	\State {$ind \leftarrow \perp, s\leftarrow \perp$}
\end{algorithmic}
\end{algorithm}

\vspace{-0.8cm}

\begin{algorithm}[H]
\renewcommand{\thealgorithm}{}
\makeatletter
\renewcommand*{\ALG@name}{Interface}
\makeatother
\caption{Interface \texttt{out}}
\begin{algorithmic}
	\Require {$(\texttt{query}, i)\in [1, n]$}
        \If {$ind = \perp$}
            \State {$s \twoheadleftarrow \mathcal{S}$}
            \State {$ind \leftarrow i$}
            \State {$(q_1, \ldots, q_k) \leftarrow \Q(ind ,s)$}
            \State {$\mathbf{output}\: (\texttt{query}, q_1, \ldots, q_k)$ at interface $\C$ of $\MultDB$}
        \EndIf
        \State

    \Require {$\texttt{reconstruct}$}
        \State {$\mathbf{output}\: \texttt{reconstruct}$ at interface $\C$ of $\MultDB$}
        \State {Let $(a_1, \ldots, a_k)$ be the result}
        \If {$a_1 \neq \perp$ \textbf{ and } $a_k \neq \perp$}
            \State {\Return $\RR(a_1, \ldots, a_k, ind, s)$}
        \EndIf
\end{algorithmic}
\end{algorithm}

\end{mybox}
\caption{\label{multpir_client} Description of the client converter $\mathsf{mult\_pir}_{\mathsf{cli}}$ for a PIR protocol $(\Q, \AA, \RR)$ in the multi-server setting.}
\end{figure}

\begin{figure}[htbp]
\begin{mybox}{Converter $\mathsf{mult\_pir}_{\mathsf{ser}_j}$}
    \vspace{-0.4cm}
  \begin{algorithm}[H]
\renewcommand{\thealgorithm}{}
\makeatletter
\renewcommand*{\ALG@name}{Interface}
\makeatother
\caption{Interface \texttt{out}}
\begin{algorithmic}
    \Require {$\texttt{answer}$}
        \State {$\mathbf{output}\: \texttt{getQuery}$ at interface $\S_{j}$ of $\MultDB$}
        \State {Let $q_j$ be the result}
            \If {$q_j \neq \perp$}
                \State {Retrieve $\M$ with $\texttt{read}$ requests at interface $\S_{j}$}
                \State {$a_j \leftarrow \AA(j, \M, q_j)$}
                \State {$\mathbf{output}\: (\texttt{answer}, a_j)$ at interface $\S_{j}$ of $\MultDB$}
            \EndIf    
\end{algorithmic}
\end{algorithm}

\end{mybox}
\caption{\label{multpir_server} Description of the server converter $\mathsf{mult\_pir}_{\mathsf{ser}_j}$ for a PIR protocol $(\Q, \AA, \RR)$ in the multi-server setting.}
\end{figure}

\begin{figure}[htbp]
\begin{mybox}{Resource $\PrivMultDB_{\Sigma, n, k, t}$}
    \vspace{-0.4cm}
  \begin{algorithm}[H]
\renewcommand{\thealgorithm}{}
\makeatletter
\renewcommand*{\ALG@name}{Initialization}
\makeatother
\caption{Initialization}
\begin{algorithmic}
	\State $\textsc{Init}, \textsc{Active}, \textsc{Coalition} \leftarrow \texttt{false}$
	\State $\Hist_j \leftarrow [~], q_j \leftarrow  \perp, a_j\leftarrow \perp, b_j\leftarrow \texttt{false } \mathbf{for} ~ j = 1\ldots k$
	\State $\M \leftarrow [~], ind  \leftarrow \perp$
\end{algorithmic}
\end{algorithm}

\vspace{-0.8cm}

\begin{multicols}{2}

\begin{algorithm}[H]
\renewcommand{\thealgorithm}{}
\makeatletter
\renewcommand*{\ALG@name}{Interface}
\makeatother
\caption{Interface $\C$}
\begin{algorithmic}
\Require {$(\texttt{query}, i)\in [1, n]$}
	\If {\Active}
	    \If {$(q_1,  \ldots, q_k) = (\perp, \ldots, \perp)$}
	        \State {$(q_1, \ldots, q_k) \leftarrow (\mathtt{ok}, \ldots, \mathtt{ok})$}
                \For {$j = 1\ldots k$}
                     \State {$\Hist_j \leftarrow \Hist_j \mid\mid (\texttt{query}, j)$}
                \EndFor
                \State {$ind  \leftarrow i$}
	    \EndIf
        \EndIf
        \State

\Require {$\texttt{reconstruct}$}
         \If {\Active}
             \If {$a_1 \neq \perp \mathbf{and} ~\ldots ~ \mathbf{and} ~a_k \neq \perp$}
                 \State {\Return $\M[ind]$}
             \EndIf
         \EndIf
\end{algorithmic}
\end{algorithm}

\columnbreak

\vspace{-0.8cm}

\begin{algorithm}[H]
\renewcommand{\thealgorithm}{}
\makeatletter
\renewcommand*{\ALG@name}{Interface}
\makeatother
\caption{Interface $\S_{j}, j\in [1,k]$}
\begin{algorithmic}
\Require {$\texttt{answer}$}
	\If {\Active}
            \If {$q_j \neq \perp$ \textbf{ and } $a_j = \perp$}
                \State {$a_j \leftarrow \texttt{ok}$}
                \State {$\Hist_j \leftarrow \Hist_j \mid\mid (\texttt{answer}, j)$}
            \EndIf
            \EndIf
            \State
              \Require {$\texttt{getHist}$}
  \If {\textsc{Coalition} \textbf{ and } $b_j$}
      \State {\Return $\Hist_j$}
  \EndIf
  \State

  \Require {$(\texttt{read}, i) \in [1, n]$}
  \If {\textsc{Coalition} \textbf{ and } $b_j$}
      \State {\Return $\M[i]$}
  \EndIf

\end{algorithmic}
\end{algorithm}
\end{multicols}

\vspace{-0.8cm}

\begin{algorithm}[H]
\renewcommand{\thealgorithm}{}
\makeatletter
\renewcommand*{\ALG@name}{Interface}
\makeatother
\caption{Interface $\W$}
\begin{algorithmic}
	\Require {$(\texttt{formCoalition}, b'_1, \ldots, b'_k)~ \in~ \lbrace \texttt{true}, \texttt{false}\rbrace^k$}
                 \If {$\vert \lbrace 1\leq j\leq k \mid b'_j = \mathtt{true}\rbrace\vert \leq t$ \textbf{ and not } $\textsc{Coalition}$}
                     \State {$(b_1, \ldots, b_k) \leftarrow (b'_1, \ldots, b'_k)$}
                     \State {$\textsc{Coalition} \leftarrow \texttt{true}$}
                 \EndIf
\end{algorithmic}
\end{algorithm}
\end{mybox}
\caption{\label{PrivMultDB}The private (and correct) database $\PrivMultDB$ with $k$ servers where at most $t$ of them can form a coalition. The interface $\C_0$ remains unchanged.}
\end{figure}

\subsection{Instantiations in the multi-server case}

We give two PIR schemes that construct $\PrivMultDB$ from $\MultDB$. The first scheme is based on Locally Decodable Codes (LDC) introduced by Katz \etal \cite{katz00}. The second uses Shamir's secret sharing \cite{S79}. The focus of this section is not to explain LDCs or secret sharing but rather to showcase the power and the flexibility of our construction.

\subsubsection{Using Locally Decodable Codes}

We give a short and informal introduction to LDCs.

\begin{definition}[Locally Decodable Code - informal]
We say that a code $C$ of length $n$ is locally decodable with locality $k < n$ if there exists a randomized decoding algorithm which, given a codeword $C(x)$ and an integer $i$, is capable of recovering $x_i$ by reading at most $k$ coordinates of $C(x)$. This must hold even if $C(x)$ is corrupted on a small enough fraction of coordinates.
\end{definition}
We will also need to define the following property of locally decodable codes.

\begin{definition}[Smoothness - informal]
We say that a LDC $C$ with locality $k$ is $t$-smooth if for every index $i$ and every query $q := (q_1, \ldots, q_k)$ issued by the decoding algorithm, any restriction of $q$ to at most $t$ of its coordinates are uniformly distributed.
\end{definition}

We can now describe the following PIR protocol. Let $C$ be a $t$-smooth LDC with locality $k$, dimension $n$ and length $n'$. We need a converter on interface $\C_0$ that encodes the database $\M$ and distributes it on the $k$ different servers of $\MultDB_{n', k, t}$. Let $\M'$ be the encoded database. Let $s$ be a random string that will be used as a source of randomness in algorithm $\Q$. On input $i$ and $s$, the algorithm $\Q$ chooses $k$ queries $q_1, \ldots, q_k$ which corresponds to $k$ indices to be read in the codeword. Each one is sent to the corresponding server. The answer algorithm returns $\A(j, \M', q_j) := \M'[q_j]$. The reconstruction algorithm $\R$ runs the local decoder on inputs $a_1, \ldots, a_k$ and $i$.

We claim that this protocol used on $\MultDB_{n', k, t}$ securely constructs, for all distinguishers, the private database $\PrivMultDB_{n, k, t}$. Indeed, since the code is $t$-smooth, the distributions of the restrictions of the random variables $\Q(i, \cdot)_j$ to at most $t$ queries are identical for all $i\in [n]$ since they are in fact all uniform. This means that a coalition of at most $t$ servers gain no information on $i$ by sharing their queries together. It is thus impossible even for an unbounded distinguisher to distinguish between real queries and simulated ones if the coalition size is at most $t$.


\subsubsection{Using Secret-Sharing}

Let $k$ be the number of servers, $t$ the desired coalition size threshold, $n$ the size of the database and $i$ the index of the database record desired by the client. We present a $t$-private PIR scheme using the $(t, k)$-Shamir secret-sharing \cite{S79}. The database $\M$ is viewed as a matrix with $c$ columns where $c$ is chosen to be the closest possible integer to $\sqrt{n}$.

The client starts by choosing $k$ non zero evaluation points $\alpha_1, \ldots, \alpha_k$ in a suitable finite field. On inputs $i$ and $s$, the algorithm $\Q$ determines the column $c_i$ where the $i$-th record lies, and secret shares component-wise the vector $e_{c_i}$ (with a $1$ in position $c_i$ and $0$ everywhere else) into $k$ shares $s_1, \ldots, s_k$ using Shamir secret-sharing. The shares $s_1, \ldots, s_k$ are the output of the algorithm $\Q$. The evaluation points and the source of randomness needed for these operations are given in the string $s$.

On inputs $\M$, $j$, and $s_j$ the algorithm $\A$ computes the product $a_j := \M s_j$ and outputs it.

Finally, on inputs $\M, i, a_1, \ldots, a_k, s$ the algorithm $\R$ can compute the Lagrange interpolation for each component of the $a_j$ to reconstruct the secret :

\begin{align*}
  & \sum_{j = 1}^{k}{a_j\prod_{i = 1, i\neq j}{\frac{\alpha_i}{\alpha_i - \alpha_j}}} \\
  = & \sum_{j = 1}^{k}{\M s_j\prod_{i = 1, i\neq j}{\frac{\alpha_i}{\alpha_i - \alpha_j}}} \\
  = & \M e_{c_i}
\end{align*}

The algorithm thus recovers the $c_i$-th column of the database which contains the $i$-th record, the one of interest for the client. $\R$ returns this record.

Here, the correctness of the PIR protocol directly follows from the correctness of the secret sharing scheme. The same holds for $t$-privacy: if a coalition of at most $t$ servers can't learn anything about $e_{c_i}$ from their combined shares then they can't learn anything about the index $i$ requested by the client. This protocol thus yields the same construction guarantees as the one described using a PIR protocol based on locally decodable codes.

\subsection{The case of Byzantine servers}

We can further strengthen the capabilities of $\PrivMultDB$ by allowing the servers to send an ill-formed answer or even to not answer at all after receiving the client's query. Such a server is called a Byzantine server. In the following, we introduce a threshold $u$ which is an upper-bound on the number of Byzantine servers. After receiving the client's query, a Byzantine server can choose to assign the special symbol $\epsilon$ to its answer. This symbol means that the answer is of no use to the client (corresponding to an absence of answer or an ill-formed answer during the execution of the PIR protocol). The Byzantine servers are designated by the environment (at interface $\W$). We present the new $\ByzPrivMultDB_{\Sigma, n, k, t, u}$ resource in fig.~\ref{ByzPrivMultDB}. We also need to allow at most $u$ Byzantine servers in the real resource $\MultDB$, which defines a new resource $\ByzMultDB_{\Sigma, n, k, u}$. The additions needed being the same as the ones made to $\PrivMultDB$, we also refer to fig.~\ref{ByzPrivMultDB} for those. The converters $\mathsf{mult\_pir}_{\mathsf{cli}}$ and $\mathsf{mult\_pir}_{\mathsf{ser}_j}$ remain unchanged.

The simulator used in the proof of construction of the new resource $\ByzPrivMultDB$ needs to be slightly modified to account for the aforementioned additions. In particular, the simulator just forwards the requests $\texttt{badAnswer}$ to interface $\S_{j}$ and when it simulates the history of the $j$-th server, if an entry of the form $(\texttt{answer}, j, \epsilon)$ is present, the simulator does not need to simulate an answer and can just copy this entry in its simulated history for the $j$-th server.

Finally, we can relax the definition of correctness for PIR protocols using the aforementioned threshold $u$ by saying that a PIR protocol is $u$-correct if the user can recover its desired record even if at most $u$ servers deviate from the protocol by providing incorrect answers or no answers at all.

\begin{figure}[htbp]
\begin{mybox}{Resource $\ByzPrivMultDB_{\Sigma, n, k, t, u}$}
    \vspace{-0.4cm}
  \begin{algorithm}[H]
\renewcommand{\thealgorithm}{}
\makeatletter
\renewcommand*{\ALG@name}{Initialization}
\makeatother
\caption{Initialization}
\begin{algorithmic}
	\State $\textsc{Byzantines} \leftarrow \texttt{false}$
	\State $c_j\leftarrow \texttt{false } \mathbf{for} ~ j = 1\ldots k$
\end{algorithmic}
\end{algorithm}

\vspace{-0.8cm}

\begin{algorithm}[H]
\renewcommand{\thealgorithm}{}
\makeatletter
\renewcommand*{\ALG@name}{Interface}
\makeatother
\caption{Interface $\S_{j}, j\in [1,k]$}
\begin{algorithmic}
\Require {$\texttt{badAnswer}$}
\If {\textsc{Byzantines} \textbf{ and } $c_j$ \textbf{ and } $a_j = \perp$}
  \If {$q_j \neq \perp$}
    \State {$a_j \leftarrow \epsilon$}
    \State {$\textsc{Hist}_j \leftarrow \textsc{Hist}_j \mid\mid (\texttt{answer}, j, \epsilon)$}
  \EndIf
\EndIf
      
\end{algorithmic}
\end{algorithm}

\vspace{-0.8cm}

\begin{algorithm}[H]
\renewcommand{\thealgorithm}{}
\makeatletter
\renewcommand*{\ALG@name}{Interface}
\makeatother
\caption{Interface $\W$}
\begin{algorithmic}
\Require {$(\texttt{formByzantines}, c'_1, \ldots, c'_k)~ \in~ \lbrace \texttt{true}, \texttt{false}\rbrace^k$}
                 \If {$\vert \lbrace 1\leq j\leq k \mid c'_j = \mathtt{true}\rbrace\vert \leq u$ \textbf{ and not } $\textsc{Byzantines}$}
                     \State {$(c_1, \ldots, c_k) \leftarrow (c'_1, \ldots, c'_k)$}
                     \State {$\textsc{Byzantines} \leftarrow \texttt{true}$}
                 \EndIf                     
\end{algorithmic}
\end{algorithm}

\end{mybox}
\caption{\label{ByzPrivMultDB}The private and Byzantine-resistant database $\ByzPrivMultDB$ with $k$ servers where at most $t$ of them can form a coalition to share information and $u$ of them can endorse a Byzantine behavior. Only the additions to $\PrivMultDB$ appear. The same additions are made to $\MultDB$ to define $\ByzMultDB$.}
\end{figure}

\section{Future work}
\label{sec:ccl}
A natural further study would be to include more interactive protocols as well as more PIR variants (e.g. SPIR, batch PIR) to our construction. For SPIR, we use the flexibility of CC to give security guarantees to the server. This is done using specifications intersection \cite{MPR21}. For UE, it would be interesting to extend our theorem \ref{thUE} to the $\mathsf{CCA}$ context of \cite{FMM21}.

\bibliography{bibliography}

\appendix

\section{Security Games}
\label{sec:orgfb11b76}
Fig.~\ref{oracles} is a description, taken from \cite{boyd20_fast_secur_updat_encry}, of the oracles used in the $\CPA$ setting.

   \begin{figure}[htbp]
\begin{multicols}{2}
 \begin{algorithm}[H]
 \renewcommand{\thealgorithm}{}
 \makeatletter
 \renewcommand*{\ALG@name}{}
 \makeatother
 \caption{$\mathsf{Setup}(1^\lambda)$}
 \begin{algorithmic}
	 \State {$k_0 \leftarrow \UE.\KG(1^\lambda)$}
     \State {$\Delta_0 \leftarrow \perp, e \leftarrow 0$}
     \State {$\mathcal{L} \leftarrow [~], \textsc{Chall} \leftarrow \texttt{false}$}
 \end{algorithmic}
 \end{algorithm}

  \vspace{-0.8cm}
 
  \begin{algorithm}[H]
 \renewcommand{\thealgorithm}{}
 \makeatletter
 \renewcommand*{\ALG@name}{}
 \makeatother
 \caption{$\mathcal{O}.\Enc(m)$}
 \begin{algorithmic}
	 \State {$c \leftarrow \UE.\Enc_{k_e}(m)$}
     \State {$\mathcal{L} \leftarrow \mathcal{L} \mid\mid (c, e, m)$}
     \State {$\Return\: c$}
 \end{algorithmic}
 \end{algorithm}

  \vspace{-0.8cm}
 
  \begin{algorithm}[H]
  \renewcommand{\thealgorithm}{}
 \makeatletter
 \renewcommand*{\ALG@name}{}
 \makeatother
 \caption{$\mathcal{O}.\mathsf{Next}()$}
 \begin{algorithmic}
     \State {$e \leftarrow e + 1$}
	 \State {$k_e \leftarrow \UE.\KG(1^\lambda)$}
     \State {$\Delta_e \leftarrow \UE.\TG(k_{e-1}, k_e)$}
     \If {\textsc{Chall}}
       \State {$\tilde{c}_e \leftarrow \UE.\Upd_{\Delta_e}(\tilde{c}_{e-1})$}
     \EndIf
 \end{algorithmic}
 \end{algorithm}

   \vspace{-0.8cm}
 
  \begin{algorithm}[H]
 \renewcommand{\thealgorithm}{}
 \makeatletter
 \renewcommand*{\ALG@name}{}
 \makeatother
 \caption{$\mathcal{O}.\Upd(c_{e-1})$}
 \begin{algorithmic}
     \If {$(c_{e-1}, e-1, m)\notin \mathcal{L}$}
       \State {$\Return \perp$}
     \EndIf
	 \State {$c_e \leftarrow \UE.\Upd_{\Delta_e}(c_{e-1})$}
     \State {$\mathcal{L} \leftarrow \mathcal{L} \mid\mid (c_e, e, m)$}
     \State {$\Return\: c_e$}
 \end{algorithmic}
  \end{algorithm}
  
 \columnbreak

  \begin{algorithm}[H]
 \renewcommand{\thealgorithm}{}
 \makeatletter
 \renewcommand*{\ALG@name}{}
 \makeatother
 \caption{$\mathcal{O}.\mathsf{Corr}(\mathsf{inp}, \hat{e})$}
 \begin{algorithmic}
     \If {$\hat{e} > e$}
       \State {\Return $\perp$}
     \EndIf
     \If {$\mathsf{inp} = \mathsf{key}$}
	   \State {\Return $k_{\hat{e}}$}
     \EndIf
     \If {$\mathsf{inp} = \mathsf{token}$}
	   \State {\Return $\Delta_{\hat{e}}$}
     \EndIf
 \end{algorithmic}
  \end{algorithm}

  \vspace{-0.8cm}

    \begin{algorithm}[H]
 \renewcommand{\thealgorithm}{}
 \makeatletter
 \renewcommand*{\ALG@name}{}
 \makeatother
 \caption{$\mathcal{O}.\mathsf{Chall}(\bar{m}, \bar{c})$}
 \begin{algorithmic}
     \If {\textsc{Chall}}
       \State {\Return $\perp$}
     \EndIf
     \State {$\textsc{Chall} \leftarrow \texttt{true}$}
     \If {$(\bar{c}, e - 1, \cdot)\notin \mathcal{L}$}
	   \State {\Return $\perp$}
     \EndIf
     \If {$b = 0$}
	   \State {$\tilde{c}_e \leftarrow \UE.\Enc_{k_e}(\bar{m})$}
     \Else
       \State {$\tilde{c}_e \leftarrow \UE.\Upd_{\Delta_e}(\bar{c})$}
     \EndIf
     \State {\Return $\tilde{c}_e$}
 \end{algorithmic}
  \end{algorithm}

  \vspace{-0.8cm}

  \begin{algorithm}[H]
 \renewcommand{\thealgorithm}{}
 \makeatletter
 \renewcommand*{\ALG@name}{}
 \makeatother
 \caption{$\mathcal{O}.\mathsf{UpdC}$}
 \begin{algorithmic}
     \If {\textbf{not} \textsc{Chall}}
       \State {\Return $\perp$}
     \EndIf
     \State {\Return $\tilde{c}_e$}
 \end{algorithmic}
  \end{algorithm}  
\end{multicols}
\caption{\label{oracles}Description of the oracles used in UE security games.}
\end{figure}

\section{$\mathsf{IND}\text{-}\mathsf{ENC}\text{+}\mathsf{UPD}\text{-}\mathsf{CPA}$ security is sufficient for constructing $\cUSMR^+$}
\label{sec:org41788ad}

\begin{proof}
  Let \(\R := \enc^\C \upd^\S [\USMR^+, \UpdKey]\) be the the real system and \(\I := \sigma_{k, \CPAp}^\S \cUSMR^+\) be the ideal system. In order to determine the advantage of a distinguisher in distinguishing \(\R\) from \(\I\), denoted by \(\Delta^\D(\R, \I)\), we proceed with a sequence of systems. We introduce a hybrid system \(\SS\), then we determine the distinguishing advantages \(\Delta^\D(\R, \SS)\) and \(\Delta^\D(\SS, \I)\), the triangular inequality allows us to bound \(\Delta^\D(\R, \I)\) by the sum of those two advantages.
\medskip

\noindent
\textbullet \(\:\)  Let \(\SS\) be a resource that behaves just like \(\R\) when leaking updated ciphertexts and just like \(\I\) when leaking fresh encryptions. Concretely, \(\SS\) maintains the same database as \(\R\) using the UE scheme and, when \(\SS\) is asked to leak an updated ciphertext it returns it as is but when \(\SS\) is asked to leak a fresh ciphertext encrypting $\M[k]$, it returns an encryption of a random \(\bar{x}\) of length $\vert \M[k]\vert$.

Let \(q\) be an upper bound on the number of $(\texttt{write}, k, .)$ queries issued to the systems. We define a hybrid resource \(\H_i\) that behaves just like \(\R\) on the first \(i\) ($\texttt{write}, k, .$) queries and like \(\SS\) afterwards. Then we define a reduction \(\CC_i\) that behaves like \(\H_i\) except it uses the game \(\G^{\ENCCPA}_b\) oracles instead of doing the UE operations by itself and on the \(i\)-th $(\texttt{write}, k, .)$ request (of the form \((\texttt{write}, k, x)\)) it challenges the game with input \((x, \bar{x})\) to receive the ciphertext. We have

\[ \R \equiv \H_q \text{ and } \SS \equiv \H_0 \]

and

\[ \H_i \equiv \G_0^\ENCCPA \CC_i \equiv \G_1^\ENCCPA \CC_{i+1} \]

Indeed, this can be seen on the following timeline (\ref{tab:org71ef67a}) 

\begin{table}[htbp]
\centering
\begin{tabular}{|c|c|c|c|c|}
\hline
\(j\)-th $(\texttt{write}, k, .)$ query & \(j<i\) & \(j=i\) & \(j=i+1\) & \(j>i+1\)\\
\hline
\({\color{green}\G_0^\ENCCPA} \CC_i\) & \(\Enc(x)\) & \color{green}{$\Enc(x)$} & \(\Enc(\bar{x})\) & \(\Enc(\bar{x})\)\\
\hline
\({\color{green}\G_1^\ENCCPA} \CC_{i+1}\) & \(\Enc(x)\) & \(\Enc(x)\) & \color{green}{$\Enc(\bar{x})$} & \(\Enc(\bar{x})\)\\
\hline
\end{tabular}
\caption{\label{tab:org71ef67a}Leakage behavior of both systems for each \((\texttt{write}, k, .)\) request.}

\end{table}

Let \(\CC_I\) be a reduction that samples \(i\in [1, q]\) at random and behaves like \(\CC_i\) and define \(\D' := \D\CC_I\). We have,

\[ \Pr[\D'(\G_0^\ENCCPA) = 1] = \frac{1}{q} \sum_{i = 1}^q{\Pr[\D(\CC_i \G_0^\ENCCPA) = 1]} \]

and

\begin{align*}
\Pr[\D'(\G_1^\ENCCPA) = 1] & = \frac{1}{q} \sum_{i=1}^q{\Pr[\D(\CC_i \G_1^\ENCCPA) = 1]} \\
& = \frac{1}{q} \sum_{i=0}^{q-1}{\Pr[\D(\CC_i \G_0^\ENCCPA) = 1]}
\end{align*}

Finally, the advantage of the distinguisher in distinguishing system \(\R\) from \(\SS\) is

\begin{align*}
  \Delta^\D (\R, \SS) & = \Delta^\D(\H_q, \H_0) \\
                      & = \Delta^\D(\CC_q \G_0^\ENCCPA, \CC_0 \G_0^\ENCCPA) \\
                      & = \vert \Pr[\D(\CC_q \G_0^\ENCCPA) = 1] - \Pr[\D(\CC_0 \G_0^\ENCCPA) = 1] \vert \\
                      & = \vert \sum_{i=1}^q{\Pr[\D(\CC_i \G_0^\ENCCPA) = 1]} - \sum_{i=0}^{q-1}{\Pr[\D(\CC_i \G_0^\ENCCPA) = 1]} \vert \\
                      & = q \cdot \vert \Pr[\D'(\G_0^\ENCCPA) = 1] - \Pr[\D'(\G_1^\ENCCPA) = 1] \vert \\
                      & = q\cdot \Delta^{\D'}(\G_0^\ENCCPA, \G_1^\ENCCPA)
\end{align*}

\medskip

\noindent
\textbullet \(\:\) Let us consider the systems \(\SS\) and \(\I\). By definition, \(\SS\) behaves just like \(\I\) when leaking fresh encryptions but, when asked to leak what should be an updated ciphertext encrypting $\M[k]$, \(\SS\) returns this updated ciphertext while \(\I\) simply returns an update of an encryption of a random \(\bar{x}\) of length $\vert \M[k]\vert$.

Let \(r\) be an upper bound on the number of update queries issued to the systems. We define a hybrid resource \(\H'_i\) that behaves just like \(\SS\) on the first \(i\) update queries to location $k$ and like \(\I\) afterwards. Then we define a reduction \(\CC'_i\) that behaves like \(\H'_i\) except it uses the game \(\G^{\UPDCPA}_b\) oracles instead of doing the UE operations by itself and on the \(i\)-th update computation for the encryption $c$ of $\M[k]$, it challenges the game with input \((c, \bar{c})\) to receive either an updated version of \(c\) or \(\bar{c}\), the encryption of a random $\bar{x}$ of length $\vert \M[k]\vert$. We have

\[ \SS \equiv \H'_r \text{ and } \I \equiv \H'_0 \]

and

\[ \H'_i \equiv \G_0^\UPDCPA \CC'_i \equiv \G_1^\UPDCPA \CC'_{i+1} \]

Indeed, this can be seen on the following timeline (\ref{tab:orgcca2914}) 

\begin{table}[htbp]
\centering
\begin{tabular}{|c|c|c|c|c|}
\hline
\(j\)-th update query for $\M[k]$ & \(j<i\) & \(j=i\) & \(j=i+1\) & \(j>i+1\)\\
\hline
\({\color{green}\G_0^\UPDCPA} \CC'_i\) & \(\Upd(c)\) & \color{green}{$\Upd(c)$} & \(\Upd(\bar{c})\) & \(\Upd(\bar{c})\)\\
\hline
\({\color{green}\G_1^\UPDCPA} \CC'_{i+1}\) & \(\Upd(c)\) & \(\Upd(c)\) & \color{green}{$\Upd(\bar{c})$} & \(\Upd(\bar{c})\)\\
\hline
\end{tabular}
\caption{\label{tab:orgcca2914}Leakage behavior of both systems for each update request ($\bar{c}$ is always the encryption of a random plaintext of length $\vert \M[k]\vert$).}

\end{table}

Let \(\CC'_I\) be a reduction that samples \(i\in [1, r]\) at random and behaves like \(\CC'_i\) and define \(\D'' := \D\CC'_I\). We have,

\[ \Pr[\D''(\G_0^\UPDCPA) = 1] = \frac{1}{r} \sum_{i = 1}^r{\Pr[\D(\CC'_i \G_0^\UPDCPA) = 1]} \]

and

\begin{align*}
    \Pr[\D''(\G_1^\UPDCPA) = 1] & = \frac{1}{r} \sum_{i=1}^r{\Pr[\D(\CC'_i \G_1^\UPDCPA) = 1]} \\
     & = \frac{1}{r} \sum_{i=0}^{r-1}{\Pr[\D(\CC'_i \G_0^\UPDCPA) = 1]}
    \end{align*}

Finally, the advantage of the distinguisher in distinguishing system \(\SS\) from \(\I\) is

\begin{align*}
  \Delta^\D (\SS, \I) & = \Delta^\D(\H'_r, \H'_0) \\
                      & = \Delta^\D(\CC'_r \G_0^\UPDCPA, \CC'_0 \G_0^\UPDCPA) \\
                      & = \vert \Pr[\D(\CC'_r \G_0^\UPDCPA) = 1] - \Pr[\D(\CC'_0 \G_0^\UPDCPA) = 1] \vert \\
                      & = \vert \sum_{i=1}^r{\Pr[\D(\CC'_i \G_0^\UPDCPA) = 1]} - \sum_{i=0}^{r-1}{\Pr[\D(\CC'_i \G_0^\UPDCPA) = 1]} \vert \\
                      & = r \cdot \vert \Pr[\D''(\G_0^\UPDCPA) = 1] - \Pr[\D''(\G_1^\UPDCPA) = 1] \vert \\
                      & = r\cdot \Delta^{\D''}(\G_0^\UPDCPA, \G_1^\UPDCPA)
\end{align*}

\medskip

\noindent
\textbullet \(\:\) We use the triangular inequality to conclude. Let \(q\) be our upper bound on the number of writes and \(r\) be our upper bound on the number of updates. The advantage of the distinguisher in distinguishing the real system \(\R\) from the ideal one \(\I\) is

\begin{align*}
\Delta^\D(\R, \I) & \leq \Delta^\D(\R, \SS) + \Delta^\D(\SS, \I) \\
                  & = q\cdot \Delta^{\D'}(\G_0^\ENCCPA, \G_1^\ENCCPA) + r\cdot \Delta^{\D''}(\G_0^\UPDCPA, \G_1^\UPDCPA)
\end{align*}

So \(\ENCCPA+\UPDCPA\) is sufficient to securely construct the \(\cUSMR^+\) in the unbounded leakage model, where the age of each database entry is not hidden. \(\square\)
\end{proof}


\end{document}